\newenvironment{customlegend}[1][]{%
	\begingroup
	% inits/clears the lists (which might be populated from previous
	% axes):
	\csname pgfplots@init@cleared@structures\endcsname
	\pgfplotsset{#1}%
}{%
	% draws the legend:
	\csname pgfplots@createlegend\endcsname
	\endgroup
}%
\def\addlegendimage{\csname pgfplots@addlegendimage\endcsname}
\DeclareMathOperator*{\argmax}{argmax}
\DeclareMathOperator*{\argmin}{argmin}
\newtheorem{theorem}{Theorem}[subsection]
\newtheorem{lemma}{Lemma}%[subsection]
\newtheorem{definition}{Definition}%[subsection]
\newtheorem{proposition}[theorem]{Proposition}
\newtheorem{claim}{Claim}
\newcommand{\gourab}[1]{\textcolor{blue}{GP:#1}}
\newcommand{\noteng}[1]{\textcolor{red}{NG:#1}}
\begin{document}

%\title{On Fair Virtual Conference Scheduling: \\ Achieving Equitable Participant and Speaker Satisfaction}
\title{Scheduling Virtual Conferences Fairly: \\ Achieving Equitable Participant and Speaker Satisfaction}
\if 0
\author[1]{Gourab K Patro}
\author[2]{Prithwish Jana}
\author[3]{Abhijnan Chakraborty}
\author[4]{Krishna P. Gummadi}
\author[1]{Niloy Ganguly}
\affil[1]{IIT Kharagpur, India and L3S Research Center, Germany}
\affil[2]{IIT Kharagpur, India}
\affil[3]{IIT Delhi, India}
\affil[4]{MPI-SWS, Germany}
\fi

\author{Gourab K. Patro}
\affiliation{\institution{IIT Kharagpur, India}\country{}}
\affiliation{\institution{L3S Research Center, Germany}\country{}}

\author{Prithwish Jana}
\affiliation{\institution{IIT Kharagpur, India}\country{}}

\author{Abhijnan Chakraborty}
\affiliation{\institution{IIT Delhi, India}\country{}}

\author{Krishna P. Gummadi}
\affiliation{\institution{MPI-SWS, Germany}\country{}}

\author{Niloy Ganguly}
\affiliation{\institution{IIT Kharagpur, India}\country{}}
\affiliation{\institution{L3S Research Center, Germany}\country{}}
%\fi

%\author{Gourab K Patro\\IIT Kharagpur, India\\L3S Research Center, Germany Prithwish Jana\\ IIT Kharagpur, India}

\renewcommand{\shortauthors}{Patro et al. WWW'22}
\begin{abstract}
%The (COVID-19) pandemic-induced restrictions on travel and social gatherings have prompted conference organizers to move their events online. 
Recently, almost all conferences have moved to virtual mode due to the pandemic-induced restrictions on travel and social gathering. Contrary to in-person conferences, virtual conferences face the challenge of efficiently scheduling talks, accounting for the availability of participants from different timezones and their interests in attending different talks. A natural objective for conference organizers is to maximize efficiency, e.g., total expected audience participation across all talks. However, we show that optimizing for efficiency alone can result in an unfair virtual conference schedule, where  individual utilities for participants and speakers can be highly unequal. To address this, we formally define fairness notions for participants and speakers, and derive suitable objectives to account for them. As the efficiency and fairness objectives can be in conflict with each other, 
%there is a need to maintain balance between these objectives. Thus, 
we propose a joint optimization framework that allows conference organizers to design schedules that balance (i.e., allow trade-offs) among efficiency, participant fairness and speaker fairness objectives. While the optimization problem can be solved using integer programming to schedule smaller conferences, we provide two scalable techniques to %significantly scale up the joint optimization framework to 
cater to bigger conferences. Extensive evaluations over multiple real-world datasets show the efficacy and flexibility of our proposed approaches.
\end{abstract}
\begin{CCSXML}
	<ccs2012>
	<concept>
	<concept_id>10003752.10003809.10003636.10003808</concept_id>
	<concept_desc>Theory of computation~Scheduling algorithms</concept_desc>
	<concept_significance>300</concept_significance>
	</concept>
	</ccs2012>
\end{CCSXML}
\ccsdesc[300]{Theory of computation~Scheduling algorithms}
\keywords{Fair Conference Scheduling, Virtual Conference Scheduling}	
\maketitle
\section{Introduction}
Restrictions on travel and social gatherings to tackle the COVID-19 pandemic have forced almost all conferences to move online, and some of them may remain online in future due to the benefits online conferences offer. 
They are hugely economical due to reduced organizational costs,
%thereby enabling participation from resource/budget constrained regions, and
and they foster inclusivity by significantly improving the scale and outreach~\cite{oxford2020conf}.
% of the conferences to foster inclusivity \cite{oxford2020conf}. 
%Furthermore, online conferences are more environmentally sustainable than in-person physical conferences due to the reduction in carbon footprint from long-distance air travels, huge power and non-renewable consumption at event venues, etc.  
%
However, online conferences have their own set of challenges, such as, %. For example, 
scaling up participation %accessibility is subject to the availability of 
depends on stable and high-speed Internet in different regions; 
%time consuming training of 
sometimes participants and speakers need to be trained on different conferencing tools 
% is essential 
for efficient participation~\cite{saliba2020getting}.
A big challenge in organizing an online conference is {\it optimal scheduling of the conference talks}. 
Online conferences usually have participants from all around the globe, unlike the physical conferences where participants assemble at a single place. Thus, traditional timezone-specific conference schedules (based on the timezone of the venue) are no longer suitable for online conferences, as the participants from the other parts of the globe will find it hard to attend (\cite{donaldson2020report,misa2020lessons}).
This demands for conference schedules to be {\it timezone-aware} instead of {\it timezone-specific}, and may stretch beyond the usual 7-8 hours a day, % of a day,
%stretched over all $24$ hours period of a day 
to cater to participants from different timezones.
In this paper, we focus on this conference scheduling problem and %relevant concerns of 
associated concerns about efficiency and fairness.
In conference scheduling, a natural objective for organizers would be to maximize an efficiency measure, such as the {\it total expected audience participation across all talks}---similar to the participation metrics used in prior literature on optimal meeting scheduling \cite{garrido1996multi,capek2008event,maheswaran2004taking,pino1998scheduling,chun2003optimizing}. 
%(defined in \cref{subsec:tep}).
%
While %the prior work on meeting scheduling has 
earlier works have focused on optimizing efficiency, optimizing for such objective alone in online conference settings can result in a schedule where
%unfair to individual  stakeholders (as illustrated in \cref{subsec:tensions}), i.e., 
the level of satisfaction of individual participants 
%(formally defined in \cref{subsec:participant_satis}) 
may vary widely, and the expected exposure 
(audience size) at different talks can be highly %disproportionately 
skewed---leading to disparity in speaker satisfactions.
%formally defined in \cref{subsec:speaker_satis}.
Note that the prior works on meeting scheduling \cite{capek2008event,maheswaran2004taking,chun2003optimizing} model only the participant satisfaction, but have no concept of speaker satisfaction.
Intuitively, a participant would be less satisfied if her favorite talks are scheduled in timeslots unfavorable for her, and similarly a speaker would be less satisfied if her talk is scheduled in a timeslot that adversely %affects her deserved exposure ( adversely 
limits the expected audience or crowd at her talk.
Thus the organizers need to consider fairness along with efficiency.

We formally define the conference scheduling problem in sec-\ref{sec:prelim} alongside suitable measures of participant satisfaction, speaker satisfaction, and efficiency. 
Intuitively, a schedule would be fair if it ensures equity of satisfaction among individual participants as well as among the speakers.
We formally define the fairness notions in sec-\ref{subsec:p_fairness} and sec-\ref{subsec:s_fairness}.
%
%However, as absolute parity is often hard to achieve in discrete real-world problems, we propose %suitable %relaxed fairness notions for participants and speakers suitable relaxed fairness notions and corresponding objectives 
%
%Subsequently, we reduce these fairness notions to fairness objectives through suitable unfairness measures.
%
%Both fairness objectives along with efficiency objective are important in conference scheduling.
%
%However, 
We further show that it may be impossible to maximize efficiency, participant and speaker fairness objectives simultaneously as there are fundamental tensions among these objectives (more details in sec-\ref{subsec:tensions}) --- 
optimizing one objective could cause losses in other objectives.%\blfootnote{All omitted proofs are given in the supplementary material.}
Thus, we propose a joint optimization framework (sec-\ref{subsec:joint_opt}) which allows conference organizers to design schedules that balance (i.e., allow trade-offs) among the efficiency, participant fairness and speaker fairness objectives.
We show that it can be solved using integer programming. 

While the integer program solution can be used for scheduling small conferences (sec-\ref{subsubsec:syn_random}, sec-\ref{subsubsec:FATREC}), they may not scale to bigger conferences due to the hardness of the objectives. 
%Therefore, we propose an efficient scaling up technique where we relax the integrality constraint to get fractional solutions and then use a rounding technique in a repeated manner (\cref{subsubsec:RRFS}).
%We further scale up the procedure by clustering participants with similar interest and availability profiles (\cref{subsubsec:part_clustering}) to schedule very large conferences that host several thousand participants. 
We propose two techniques (a repeated rounding technique in sec-\ref{subsubsec:RRFS} and participant clustering in sec-\ref{subsubsec:part_clustering}) to significantly scale up the joint optimization framework.
We use data from three real conferences: \href{https://piret.gitlab.io/fatrec/}{FATREC}, \href{https://recsys.acm.org/recsys17/}{RecSys} and \href{https://icml.cc/Conferences/2017}{ICML}, respectively covering small, medium and large conference categories, and test the efficacy of our proposed method. 
Extensive evaluations over these real-world datasets (alongside synthetic datasets) show that the proposed approaches are effective in providing a nice balance between efficiency and fairness objectives (sec-\ref{sec:experiments}).
%\noteng{Put the conference webpages in the footnote}
%and we also empirically evaluate the necessity and benefits of such joint optimization approach in virtual conference scheduling along with the analysis on the pitfalls of baseline approaches.
%
%Our focus, in this paper, is more towards bringing out the fundamental tensions, trade-offs, and difficulties involved in online conference scheduling.
%
%
%}%
To our knowledge, this is the first work to consider fairness in virtual conference scheduling. We hope that our proposal will not only help conference organizers, but also spawn future research on fine-tuning solutions to specific type of virtual conferences.
%==========================================================================================
%==============================COMMENTS ONLY===============================================
%==========================================================================================
\if 0
In summary, we make the following contributions in this paper.
\begin{itemize}
	\item To our knowledge, we are the first to formally define the problem of online conference scheduling (\cref{sec:prelim}), and the notions of efficiency (\cref{subsec:tep}), participant fairness (\cref{subsec:p_fairness}) and speaker fairness (\cref{subsec:s_fairness}). Through suitable unfairness measures, we map them to fairness objectives.
	\item We illustrate that fundamental tensions and trade-offs that exist between the two fairness objectives and efficiency objective (\cref{subsec:tensions}). We propose a joint optimization framework to suitably balance these objectives, and provide techniques to scale up for scheduling large conferences (\cref{sec:methodology}).
	\item We gather and curate relevant synthetic and realworld datasets. We also present extensive evaluations to empirically illustrate the efficacies of proposed approaches (\cref{sec:experiments}).
\end{itemize}
\gourab{\begin{itemize}
		\item COVID-19 lockdowns, travel bans
		\item Virtual conferences
		\item Economical and environmentally sustainable
		\item Also facilitates participation from resource constrained regions, unique opportunity to improve scale and outreach
		\item virtual ones are not going to go away in the near future
\end{itemize}}
\gourab{\begin{itemize}
		\item Fundamental difference between physical conf and virtual
		\item Availability issue due to time zone
		\item Older venue-specific schedules are not good...in fact really bad (could reduce participantion significantly) (old: interests only, virtual: interest + availability)
		\item Schedule must stretched to all 24-hour period of a day
		\item Natural objective to optimize: efficiency or total expected participation
		\item Efficiency optimization could introduce disparity in participant satisfactions and also in speaker satisfaction (unfairness)--illustrated in examples
		\item What does unfairness mean for participants and speakers?
\end{itemize}}
\gourab{\begin{itemize}
		\item Formally define the problem setting
		\item As there are two stakeholders here, we formally model the satisfaction for both stakeholders given a conference schedule (through cumulative gain and expected crowd and their normalized forms)
		\item Formally define efficiency from a conference schedule as the total expected crowd
		\item Formally define participant fairness and speaker fairness constraints
		\item Reduce them to fairness objectives through suitable measures of unfairness
\end{itemize}}
\gourab{\begin{itemize}
		\item Both fairness objectives and efficiency are desirable in our setting
		\item However there are tensions between fairness and efficiency
		\item Objectives could be in conflict with each other--illustrated in example
		\item Could be impossible to optimize all of them simulatneously because of possible tradeoffs
		\item There is a need to suitably balance these objectives while still caring for them simultaneously
\end{itemize}}
\gourab{\begin{itemize}
		\item Propose joint optimization problem
		\item Balancing fairness objectives through hyperparameter settings
		\item Tested on synthetic and realworld datasets
		\item synthetic dataset with specific patterns of interest and availability (for interesting insights and balance between the objectives)
		\item Empirical results show the efficicacies of joint optimization approach
\end{itemize}}
\gourab{Contributions:
	\begin{itemize}
		\item Modelled the conference scheduling problem along with the satisfaction of the participants and speakers. 
		\item Formally defined efficiency, participant fairness, speaker fairness. Also illustrated some inalienable and fundamental tensions between them.
		\item Proposed joint optimization framework which can suitably balance the objectives while still caring for all of them simultaneously. Empirically show the efficacies of our approach.
\end{itemize}}
\fi

\section{\bf Related Work}\label{sec:related}
We briefly review related research efforts in the following two directions: job scheduling and event scheduling.
~\\{\bf Job and Network Scheduling:}
The most commonly studied scheduling problem in computing research is job/network scheduling:
%The general setting of the problem is as follows: 
it usually has multiple agents (e.g., system processes, computing jobs, data packets, networked users or machines) who have shared access to common resource(s) (e.g., fixed number of processors, limited internet bandwidth), and the agents raise requests for using the common resource(s) from time to time;
now the goal is to allocate the resource(s) to the agents in a fair and optimal manner.
Examples include fair-share scheduling for system processes \cite{kay1988fair,li2009efficient,lozi2016linux}, 
%scheduling packet transfers to ensure 
fair sharing of network channels \cite{vaidya2005distributed}, 
fair scheduling of computing jobs on computing clusters \cite{isard2009quincy,mahajan2019themis}, 
scheduling for devices in shared wireless charging systems \cite{fang2018fair}, and 
fair scheduling of retrieval queries for databases \cite{harris2015fair}.
Our problem setup for fair conference scheduling is very different from a typical job scheduling setup. 
While conference scheduling has two types of stakeholders---participants and speakers---who have different functions and fairness requirements, job scheduling problems are usually modeled %for one type of stakeholders---
only for the agents who use the shared resource.
~\\{\bf Meeting/Event Scheduling:}
The problem which is closely related to the conference scheduling problem is meeting or event scheduling where there are multiple agents with different availability in different time intervals, and the goal is to find an optimal schedule for meeting(s).
Some works \cite{pino1998scheduling,chun2003optimizing} also capture participants' personal preferences for over the set of events as it can affect their participation.
We consider both the availability and preferences/ interests of the participants while modeling their satisfaction in sec-\ref{sec:prelim}.
The optimality of a schedule has been predominantly associated with its efficiency in bringing more participation \cite{garrido1996multi,capek2008event,maheswaran2004taking}.
We, too, model this as the efficiency metric which captures the total expected participation given a schedule (as in sec-\ref{subsec:tep}).
While most of the works have relied on centralized scheduling architecture, a line of works \cite{maheswaran2006privacy, wallace2005constraint,sen1998formal,freuder2001privacy,lee2016probabilistic} explore the decentralized scheduling due to privacy concerns from the participants' side.
We model the conference scheduling problem using the former one.
%Centralised scheduling (ours) vs. Distributed scheduling (privacy concerns, negotiations, etc.)
%One of the interesting directions is the setting where for example: finalizing schedules with minimal negotiations with agents \cite{sen1998formal}; privacy loss \cite{maheswaran2006privacy, wallace2005constraint}, preference elicitation for centralised \cite{chun2003optimizing}
%schedules with optimal availability of the agents \cite{garrido1996multi,capek2008event}.
%methods like distributed constraint optimization \cite{maheswaran2004taking} have been proposed.
%Even though there is a similarity between meeting scheduling and our conference scheduling, in meeting scheduling problems, In addition to that most of these works consider only binary availability of the agents, and are often focussed more on optimizing efficiency.While there has been work on strategy-proof scheduling \cite{ephrati1994non} and privacy-aware scheduling \cite{freuder2001privacy,lee2016probabilistic}, 
In meeting scheduling, utility/satisfaction is modeled only for the participants, and there has been no consideration of satisfaction from the side of the event (i.e., no concept of speakers as individuals with self interests).
In sharp contrast, the speakers in a conference also have satisfaction attached to them (sec-\ref{subsec:speaker_satis}).
Meeting scheduling has focused more on optimizing participation, fairness has received little attention in such settings (except for \citet{baum2014scheduling} dealing with a very different context).
Besides there have been very few works on conference scheduling, but they focus on in-person settings \cite{pisanski2019use,stidsen2018scheduling} and efficiency objectives \cite{atagun2020effectiveness,correia2021scheduling}.
Our mFairConf framework not only accommodates satisfaction of participants and speakers, but also cares for both efficiency and fairness in conference scheduling.
\if 0
\begin{itemize}
	\item (done)Fair-share job scheduler \cite{kay1988fair}
	\item (done)Completely fair scheduler \cite{li2009efficient,lozi2016linux}
	\item (done)Fair scheduling in shared wireless network \cite{vaidya2005distributed}
	\item (done)Fair scheduling for distributed computing clusters \cite{isard2009quincy}
	\item (done)Fair scheduling in shared wireless charging \cite{fang2018fair}
	\item (done)Fair scheduling of gpu clusters \cite{mahajan2019themis}
	\item (done)Fair scheduling for queries in information retrieval systems \cite{harris2015fair}
\end{itemize}
limitations:
1> problems of very different format
2> fairness considerations only from the consumer side (i.e., the jobs/agents/users who get to use/consume the shared resource)
\begin{itemize}
	\item (Done)Meeting scheduling google patent \cite{capek2008event}
	\item (Done)DCOP for distributed event scheduling \cite{maheswaran2004taking}
	\item (Done)Distributed meeting scheduling \cite{sen1998formal,garrido1996multi}
	\item Fairness in scheduling appointments to allow fair incentivization of employees at hospitals \cite{baum2014scheduling}
	\item (Done)Non-manipulative meeting scheduling system (strategy-proof) \cite{ephrati1994non}
	\item (Done)Privacy issue in meeting scheduling \cite{freuder2001privacy}
	\item (Done)Meeting scheduling through minimal availability queries to the agents \cite{lee2016probabilistic}
\end{itemize}
limitations:
1> Focussed more on optimizing efficiency which is same as welfare
2> Cared for binary availability of agents
3> No consideration of satisfaction from the side of meetings (no concept of speakers as individuals with self interests)
4> Very little work into fairness considerations (except for a few)
\fi

\section{Preliminaries}\label{sec:prelim}
{\bf Problem Setup:}
In a conference, let $\mathcal{P}$, $\mathcal{T}$, and $\mathcal{S}$ represent the sets of participants, planned talks, and available slots (non-overlapping) respectively;
$|\mathcal{P}|=m$, $|\mathcal{T}|=n$, $|\mathcal{S}|=l$;
let $p\in \mathcal{P}$, $t\in\mathcal{T}$, and $s\in\mathcal{S}$ be instances of participant, talk, and slot respectively.
Assuming a talk to be scheduled only once, a conference schedule $\Gamma$ is a mapping $\Gamma:\mathcal{T}\rightarrow\mathcal{S}$.
Note that, in this paper, we limit ourselves to the case with no parallel or overlapping time slots;
this implies that each slot refers to a unique time interval.
Thus, the conference schedule $\Gamma$ is a one-to-one mapping with $n\leq l$.
The goal of a conference scheduling problem is to find a schedule $\Gamma$ which satisfies some specified constraint(s) or optimizes some specified objective(s).
Further on, we use {\bf VCS} as abbreviation for {\bf V}irtual {\bf C}onference {\bf S}cheduling.
~\\\noindent {\bf Interest Scores $[V_p(\cdot)]$:}
The participants may have different preference levels over the set of talks.
We model this phenomenon using participant-specific interest scores.
Let $V(t|p)=V_p(t)$ represent $p$'s interest score for talk $t$.
Note that the interest score represents the probability of satisfaction of the participant on attending the corresponding talk;
i.e., $V_p(t)\in [0,1], \forall p\in\mathcal{P},t\in\mathcal{T}$.
~\\\noindent {\bf Ease of Availability $[A_p(\cdot)]$:}
In a virtual conference setting, the participants are located in different parts of the world which makes it convenient for them to attend talks only in specific times of the day 
(usually during the day time of their timezone).
Note that participants from same timezone may also have different ease of availability throughout the $24$-hour period.
%\noteng{if we are not ploting this better not talk about this}
%
Thus, we model this phenomenon using participant-specific availability scores.
Let $A(s|p)=A_p(s)$ represent the ease of availability score or the probability of $p$ making herself available in slot $s$;
i.e., $A_p(s)\in [0,1], \forall p\in\mathcal{P},s\in\mathcal{S}$. 
\subsection{Participant Satisfaction ($NCG$)}\label{subsec:participant_satis}
In virtual conferences, a participant's satisfaction depends on both her interest for the talks and her ease of availability in the time slots when the talks are scheduled.
For simplicity, we assume $V_p(\cdot)$ and $A_p(\cdot)$ to be independent of each other, i.e., the interest score of a talk does not affect the ease of availability of a participant in a slot and vice-versa.
%
%However, a participant may still attend a talk scheduled in a slot with less availability score if she has a very high interest score for the talk.
%
However, the expected gain of a participant $p$ from a talk $t$ in slot $s$ will depend on the joint probability of $p$ making herself available in $s$ and getting satisfied after attending $t$: i.e, $V_p(t)\times A_p(s)$; which in turn
%
%Note that here $V_p(t)\times A_p(s)$ also 
represents the probability of %the participant 
$p$ attending talk $t$ in slot $s$.
Thus, given a conference schedule $\Gamma$, we define the cumulative gain ($CG$) of participant $p$ as below.
\begin{equation}\small\label{eq:CG}
	CG(p|\Gamma)=CG_p(\Gamma)=\sum_{t\in\mathcal{T}} V_p(t)\times A_p(\Gamma(t))
\end{equation}
Now, let's imagine a situation wherein the participant $p$ is asked to choose the conference schedule $\Gamma$.
Assuming $p$ to be a selfish and rational agent, she would choose the schedule which benefits her the most; i.e., the one which gives her the highest cumulative gain.
Here, the best conference schedule for $p$ would be the one in which the talk with the highest $V_p$ is scheduled in the slot with the highest $A_p$, the talk with second highest $V_p$ is scheduled in the slot with the second highest $A_p$, and so on (this also follows from the Rearrangement inequality \cite{hardy1967inequalities});
let $\Gamma_p^*$ be that best conference schedule for $p$.
We call the cumulative gain of $p$ from schedule $\Gamma_p^*$ as her ideal cumulative gain ($ICG$):
$ICG(p)=ICG_p=\max_\Gamma CG_p(\Gamma)= CG_p(\Gamma_p^*)$.
%
%$ICG$ represents the maximum possible cumulative gain for a participant.
%depending on the interest scores of the participant and her availability scores.
%\noteng{This equation can be removed - not too much needed}
%
%Thus, a participant with higher overall interests or higher overall availability will naturally have higher $ICG$.
%
We now define the overall satisfaction of a participant as her normalized cumulative gain ($NCG$) as below.
\begin{equation}\small
	NCG(p|\Gamma)=NCG_p(\Gamma)=\dfrac{CG_p(\Gamma)}{ICG_p}
\end{equation}
%
%Note that t
Since the denominator $ICG_p$ is the maximum possible cumulative gain for the participant $p$,
$NCG_p(\Gamma)\in [0,1]$, $\forall p\in \mathcal{P}$, $\forall \Gamma$.
\subsection{Speaker Satisfaction ($NEC$)}\label{subsec:speaker_satis}
%
%The speaker of a talk gets satisfied if her talk gets participation; 
%more the participation more will be the speaker satisfaction.  % 
We model the satisfaction of a speaker using the expected participation or crowd at her talk.
To have high speaker participation, the talk needs to be scheduled in a slot with high ease of availability of the interested participants.
% who are highly interested in the talk.  
Thus, given a schedule $\Gamma$, we define expected crowd ($EC$) at talk $t$ as below.
\begin{equation}\small\label{eq:EC}
	EC(t|\Gamma)=EC_t(\Gamma)=\sum_{p\in\mathcal{P}}V_p(t)\times A_p(\Gamma(t))
\end{equation}
Now if the speaker of talk $t$ is asked to prepare the conference schedule, she would try to maximize her expected crowd (assuming that similar to participants, speakers are also selfish and rational agents).
Such a schedule can be easily constructed by searching through the set of available slots to find the slot with the highest expected crowd for $t$, and then randomly allocating the remaining talks to the remaining slots.
Let that best schedule for $t$ be denoted as $\Gamma_t^*$.
We call the expected crowd at talk $t$ with schedule $\Gamma_t^*$ as the ideal expected crowd ($IEC$) of $t$:
$IEC(t)=IEC_t=\max_\Gamma EC_t(\Gamma)=EC_t(\Gamma_t^*)$.
$IEC$ represents the maximum value for expected crowd at the talk.
%depending on the overall interest scores of the participants and their availability scores.
%
%Thus, a talk with higher overall interests from the participants will have higher $IEC$.
%
We now define the overall satisfaction of a speaker as the normalized expected crowd ($NEC$) at her talk as below.
\begin{equation}\small
	NEC(t|\Gamma)=NEC_t(\Gamma)=\dfrac{EC_t(\Gamma)}{IEC_t}
\end{equation}
Since the denominator $IEC_t$ is the maximum possible value of the expected crowd at talk $t$,
thus, $NEC_t(\Gamma)\in [0,1]$, $\forall t\in \mathcal{T}$, $\forall \Gamma$.
\subsection{Efficiency Objective ($TEP$)}\label{subsec:tep}
From a mechanism design perspective, a natural objective for the conference organizers is to maximize the efficiency,
i.e., the total participation in the conference --- similar to the participation metrics used in prior literature on optimal meeting scheduling \cite{garrido1996multi,capek2008event,maheswaran2004taking,pino1998scheduling,chun2003optimizing}.
Given a talk $t$ scheduled in slot $s$, the probability of participant $p$ attending it, can be written as $V_p(t)\times A_p(s)$.
Thus, the total expected participation ($TEP$) given a schedule $\Gamma$, can be written as below.
\begin{equation}\small
	TEP(\Gamma)=  \sum_{p\in \mathcal{P}} \sum_{t\in \mathcal{T}} V_p(t)\times A_p(\Gamma(t))
	\label{eq:TEP}
\end{equation}
It is worth noting that the efficiency $TEP$, here, is same as: (i) {\it the sum of cumulative gains of all the participants}; (ii) {\it the sum of expected crowd at all the talks}; both are in the form of a utilitarian social welfare function \cite{sen1982utilitarianism}.
% i.e.,
%\begin{multline}
%	TEP(\Gamma)= \sum_{p\in \mathcal{P}}\sum_{t\in \mathcal{T}} V_p(t)\times A_p(\Gamma(t))=\sum_{p\in \mathcal{P}} CG(p|\Gamma)\\
%	= \sum_{t\in \mathcal{T}}\sum_{p\in \mathcal{P}} V_p(t)\times A_p(\Gamma(t))=\sum_{t\in \mathcal{T}} EC(t|\Gamma)
%\end{multline}
%
%Now the natural social welfare objective for conference scheduling can be written as $\argmax_\Gamma TEP(\Gamma)$.
%
We use $\Gamma^\text{EM}$ to represent the schedule which maximizes the efficiency;
i.e., $\Gamma^\text{EM}=\argmax_\Gamma TEP(\Gamma)$.
\begin{lemma}
	Efficiency maximization in VCS can be mapped to a min cost bipartite matching problem with polynomial time solution.
	\label{lemma:swm_max_matching}
\end{lemma}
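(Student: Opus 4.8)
The plan is to exploit the additive structure of $TEP$ over the individual talk--slot assignments. First I would swap the order of summation in Equation~\eqref{eq:TEP} and write $TEP(\Gamma)=\sum_{t\in\mathcal{T}} w(t,\Gamma(t))$, where $w(t,s):=\sum_{p\in\mathcal{P}}V_p(t)\times A_p(s)$ is a fixed quantity depending only on the pair $(t,s)$, which can be precomputed for all $n\cdot l$ pairs in $O(mnl)$ time. This recasts the search for $\Gamma^\text{EM}$ as the search for a one-to-one map $\Gamma:\mathcal{T}\rightarrow\mathcal{S}$ maximizing $\sum_{t}w(t,\Gamma(t))$, i.e., a maximum-weight matching in the complete bipartite graph with parts $\mathcal{T}$ and $\mathcal{S}$ and edge weights $w(t,s)$, subject to every talk being matched.

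Next I would reduce maximum-weight matching to minimum-cost perfect matching. Since $n\leq l$, I pad the instance with $l-n$ dummy talks, each having weight $0$ to every slot; a maximum-weight matching saturating all (real and dummy) talks in the padded instance restricts to an optimal schedule on the original instance, since the dummy edges contribute nothing. Then I set the cost of edge $(t,s)$ to $c(t,s):=C-w(t,s)$ with $C:=\max_{t,s}w(t,s)$ (any common upper bound on the weights suffices), so all costs are nonnegative and any perfect matching $M$ satisfies $\sum_{(t,s)\in M}c(t,s)=lC-\sum_{(t,s)\in M}w(t,s)$; hence a minimum-cost perfect matching is exactly a maximum-weight one, and $\Gamma^\text{EM}$ is read off from it.

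Finally I would invoke the classical fact that minimum-cost perfect matching in a balanced bipartite graph (the assignment problem) admits polynomial-time algorithms --- e.g., the Hungarian algorithm in $O((n+l)^3)$, or a min-cost max-flow formulation --- which together with the $O(mnl)$ preprocessing to build the weight matrix gives an overall polynomial-time procedure for computing $\Gamma^\text{EM}$.

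As for the main obstacle: there is no deep difficulty here. The only points needing care are (i) verifying the summation swap so that the per-edge weights $w(t,s)$ are well defined and independent of the rest of the schedule, and (ii) handling the padding when $n<l$ so that the reduction yields a genuine perfect matching without changing the optimum. Both become routine once stated explicitly.
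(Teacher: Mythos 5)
Your proposal is correct and follows essentially the same route as the paper: both define the per-edge weight $w(t,s)=\sum_{p}V_p(t)A_p(s)$, convert maximization to minimization by subtracting from a constant upper bound (the paper uses $\abs{\mathcal{P}}$ where you use $\max_{t,s}w(t,s)$, which is immaterial), pad with $l-n$ dummy talks of constant cost, and invoke the Hungarian algorithm. No gaps.
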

%
%%%COMMMENTS ONLY==================================================================================
\if 0
\begin{equation}
	\Phi(t|\Gamma)= \dfrac{\sum_{p\in\mathcal{P}}A_p(\Gamma(t))}{\max_{s\in\mathcal{S}}\sum_{p\in\mathcal{P}}A_p(s)}
\end{equation}
Here the numerator represents the expected participation in the talk given schedule $\Gamma$ while the denominator is the maximum possible participation for the talk.
The denominator acts as a normalizing factor here.
\subsection{Typical Objectives for Scheduling}\label{subsec:typical_obj}
The conference organizers can look at various aspects while deciding the conference schedule ($\Gamma$).
Here, we list two very natural objectives suitable in these cases.

\noindent {\bf Participation Objective (PO):}
A very natural objective here would be to maximize the overall participation in the conference which can be given as below.
\begin{equation}
	\argmax_\Gamma \sum\limits_{t\in\mathcal{T}}\sum\limits_{p\in\mathcal{P}} A_p(\Gamma(t))
\end{equation}
While this objective can ensure maximum crowd, it clearly ignores (i) whether the participants get satisfied after attending, and (ii) whether the talks are attended by relevant individuals leading to good post-talk discussions and good overall impact on the audience.
These aspects are also important and should be considered while scheduling.

\noindent {\bf Meaningful Participation Objective (MPO):}
The objective should take the satisfaction scores ($V_p(\cdot)$) of the participants into consideration in order to ensure meaningful participation.
A higher participation score can lead to better participant satisfaction and more relevant participants in the audience.
Thus, the meaningful participation objective can be given as below.
\begin{equation}
	\argmax_\Gamma \sum\limits_{t\in\mathcal{T}}\sum\limits_{p\in\mathcal{P}} V_p(t)\times A_p(\Gamma(t))

\end{equation}
\fi 

\subsection{Fairness in Conference Scheduling}\label{sec:motivation}
%\noteng{Is it not possible to write it in gini index term??}
%Optimizing for just participation-based objective could result in very high satisfaction for some participants while other participants may end up being not-so satisfied; same could happen for the speakers too.
%
%Here, we define fairness notions for the participants and the speakers. 
%in \cref{subsec:p_fairness} and \cref{subsec:s_fairness} respectively.
%
\subsubsection{\bf Participant Fairness}\label{subsec:p_fairness}
%
%To ensure fairness,  of participants needs to be mininmized. 
%This we formalize by the following definition. 
We postulate that disparity in normalized satisfactions may cause participant unfairness, and to ensure fairness for participants, the conference schedule should equally satisfy all the participants.
However, such a hard constraint might become infeasible in real-world cases. Thus, we define a relaxed unfairness measure for participants as below.
\begin{definition}
	{\bf Participant Unfairness $\big(\Psi^\text{P}(\Gamma)\big)$:}
	The participant unfairness caused by a schedule $\Gamma$, is the maximum difference between the satisfactions of any two participants.
	\begin{equation}\small
		\Psi^\text{P}(\Gamma)=\big\{\max_{p_i\in\mathcal{P}} NCG(p_i|\Gamma)\big\}-\big\{\min_{p_j\in\mathcal{P}} NCG(p_j|\Gamma) \big\}
		\label{eq:participant_unfairness}
	\end{equation}
	\label{def:participant_unfairness}
\end{definition}
%
%Using the above unfairness measure, 
The {\bf fairness objective for participants} can be defined as finding the schedule $\Gamma$ which minimizes $\Psi^\text{P}(\Gamma)$.
	\begin{equation}\small
		\argmin_\Gamma \Psi^\text{P}(\Gamma)\equiv \argmin_\Gamma \bigg\{\Big\{\max_{p_i\in\mathcal{P}} NCG(p_i|\Gamma)\Big\}-\Big\{\min_{p_j\in\mathcal{P}} NCG(p_j|\Gamma)\Big\}\bigg\}
		\label{eq:participant_fairness_obj}
	\end{equation}
We now give the decision variant of the participant fairness objective which is NP-complete (as in definition-\ref{def:participant_fairness_decision} and theorem-\ref{theorem:participant_fairness_np_hard}; proof is given in the supplementary material).
\begin{definition}
	{\bf Decision variant of participant fairness:} Given $\mathcal{P}$, $\mathcal{T}$, $\mathcal{S}$, and $V_p(t)$, $A_p(s)$ $\forall p,t,s \in \mathcal{P},\mathcal{T},\mathcal{S}$, and $\epsilon\in \mathbf{R}^{\geq0}$, does there exist a schedule $\Gamma$ (a mapping from $\mathcal{T}$ to $\mathcal{S}$) such that $\Psi^P(\Gamma)\leq \epsilon$?
	\label{def:participant_fairness_decision}
\end{definition}
\begin{theorem}
	The participant fairness problem (as given in definition-\ref{def:participant_fairness_decision}) is NP-complete.
	\label{theorem:participant_fairness_np_hard}
\end{theorem}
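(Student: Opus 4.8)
The plan is to show membership in NP and then NP-hardness by reduction from \textsc{Partition}. For membership, observe that a schedule $\Gamma$ is itself a polynomial-size certificate and that $\Psi^\text{P}(\Gamma)$ is computable in polynomial time: each $CG_p(\Gamma)=\sum_{t\in\mathcal{T}}V_p(t)A_p(\Gamma(t))$ is a direct sum, and each normalizer $ICG_p=\max_\Gamma CG_p(\Gamma)$ is obtained — exactly as noted in Section~\ref{subsec:participant_satis} via the Rearrangement inequality — by sorting the talks in decreasing $V_p$-order, picking the $n$ slots with largest $A_p$ sorted in decreasing order, and pairing them off, at cost $O(n\log n+l\log l)$. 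Hence all $NCG_p(\Gamma)$, and therefore $\Psi^\text{P}(\Gamma)$, are polynomial-time computable and checking $\Psi^\text{P}(\Gamma)\le\epsilon$ is immediate, so the problem is in NP.

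For hardness, start from a \textsc{Partition} instance: positive integers $w_1,\dots,w_n$ with $\sum_i w_i=2W$. Build a VCS instance with two participants $p_1,p_2$; talks $\mathcal{T}=\{t_1,\dots,t_n\}$; and $l=2n$ slots partitioned into a ``region'' $\mathcal{A}$ (the first $n$ slots) and a region $\mathcal{B}$ (the last $n$). Set $V_{p_1}(t_i)=V_{p_2}(t_i)=w_i/(2W)$ for all $i$ (all in $(0,1]$, rationals of polynomial size); $A_{p_1}(s)=1$ on $\mathcal{A}$ and $0$ on $\mathcal{B}$; $A_{p_2}(s)=0$ on $\mathcal{A}$ and $1$ on $\mathcal{B}$; and set the threshold $\epsilon=0$. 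This construction is clearly polynomial.

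For correctness, note that for any one-to-one $\Gamma$ every talk lands in $\mathcal{A}$ or in $\mathcal{B}$, so $CG_{p_1}(\Gamma)=\tfrac{1}{2W}\sum_{i:\Gamma(t_i)\in\mathcal{A}}w_i$ and $CG_{p_2}(\Gamma)=\tfrac{1}{2W}\sum_{i:\Gamma(t_i)\in\mathcal{B}}w_i$, and these sum to $1$; moreover $ICG_{p_1}=ICG_{p_2}=\tfrac{1}{2W}\sum_i w_i=1$, achieved by placing all $n$ talks inside the $n$-slot region where that participant is available. Therefore $\Psi^\text{P}(\Gamma)=|CG_{p_1}(\Gamma)-CG_{p_2}(\Gamma)|$, which is $0$ iff $CG_{p_1}(\Gamma)=CG_{p_2}(\Gamma)=\tfrac12$, i.e. iff the talks $\Gamma$ sends to $\mathcal{A}$ have total $w$-weight exactly $W$. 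Any subset $I$ witnessing \textsc{Partition} yields such a $\Gamma$ (put $\{t_i:i\in I\}$ in $\mathcal{A}$, the rest in $\mathcal{B}$; feasible since $|I|\le n$ and $n-|I|\le n$), and conversely any such $\Gamma$ exhibits a \textsc{Partition} witness. Hence the VCS instance is a yes-instance iff the \textsc{Partition} instance is, and with membership this gives NP-completeness.

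The main obstacle is making the normalizers behave so the reduction is faithful: one needs $ICG_{p_1}=ICG_{p_2}$ and $CG_{p_1}(\Gamma)+CG_{p_2}(\Gamma)$ to be a schedule-independent constant, so that driving $\Psi^\text{P}$ to $0$ coincides exactly with the ``balanced split'' condition — the two-region/two-participant gadget is engineered for precisely this, and leaving the extra $n$ slots empty (allowed since $n\le l$) is what lets a schedule realize an arbitrary subset of talks in $\mathcal{A}$. Since \textsc{Partition} is only weakly NP-hard, this argument gives weak NP-hardness, which already suffices for the stated claim; if strong NP-hardness is desired, the identical idea applied to \textsc{3-Partition} — $k$ participants, $3k$ talks, and $3k$ slots arranged as $k$ regions of three — goes through unchanged.
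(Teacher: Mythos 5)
Your proof is correct and follows essentially the same route as the paper: membership in NP via polynomial-time evaluation of $\Psi^\text{P}$, and hardness by reducing number partitioning with two participants having identical normalized interest scores and complementary $0/1$ availability over $2n$ slots, so that $\Psi^\text{P}(\Gamma)=0$ iff the talks split into two equal-weight halves. The only additions beyond the paper's argument are the (correct) observation that this yields only weak NP-hardness and the sketched \textsc{3-Partition} variant, neither of which is needed for the stated claim.
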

\subsubsection{\bf Speaker Fairness}\label{subsec:s_fairness}
Similar to participant fairness, we define the unfairness measure and the fairness objective for speakers.
\begin{definition}
	{\bf Speaker Unfairness $\big(\Psi^\text{S}(\Gamma)\big)$:}
	The speaker unfairness caused by a schedule $\Gamma$, is the maximum difference between the satisfactions of any two speakers.
	\begin{equation}\small
	\Psi^\text{S}(\Gamma)=\big\{\max_{t_i\in\mathcal{T}} NEC(t_i|\Gamma)\big\}-\big\{\min_{t_j\in\mathcal{T}} NEC(t_j|\Gamma) \big\}
	\label{eq:speaker_unfairness}
	\end{equation}
	\label{def:speaker_unfairness}
\end{definition}
Now, the {\bf fairness objective for speakers} can be defined as finding the schedule $\Gamma$ which minimizes $\Psi^\text{S}(\Gamma)$.
\begin{equation}\small
\argmin_\Gamma \Psi^\text{S}(\Gamma)\equiv \argmin_\Gamma \bigg\{\Big\{\max_{t_i\in\mathcal{T}} NEC(t_i|\Gamma)\Big\}-\Big\{\min_{t_j\in\mathcal{T}} NEC(t_j|\Gamma)\Big\}\bigg\}
\label{eq:speaker_fairness_obj}
\end{equation}
%
%\begin{theorem}
%	The speaker fairness problem (as given in \cref{def:speaker_fairness_decision}) is NP-complete.
%	\label{theorem:speaker_fairness_np_hard}
%\end{theorem}
%
%\noteng{Perhaps puting $\epsilon$ fairness for both speaker and participant side is not correct, we can sat $\epsilon_p$ and $\epsilon_s$ to differentiate}
\if 0
%======COMMENT STARTS
\begin{definition}
	{\bf Decision variant of speaker fairness:} Given $\mathcal{P}$, $\mathcal{T}$, $\mathcal{S}$, and $V_p(t)$, $A_p(s)$ $\forall p,t,s \in \mathcal{P},\mathcal{T},\mathcal{S}$, and $\epsilon\in \mathbbm{R}^{\geq0}$, does there exist a schedule $\Gamma$ (a mapping from $\mathcal{T}$ to $\mathcal{S}$) such that $\Psi^S(\Gamma)\leq \epsilon$?
	\label{def:speaker_fairness_decision}
\end{definition}
\begin{definition}
	{\bf \boldmath$\epsilon$-Fairness for Participants:}
	For a non-negative real number $\epsilon$, a schedule $\Gamma$ is said to be $\epsilon$-fair for participants iff the following condition is satisfied.
	\begin{equation}
	\abs{NCG(p_i|\Gamma)-NCG(p_j|\Gamma)} \leq \epsilon, \space \forall p_i,p_j\in \mathcal{P}
	\label{eq:participant_fairness}
	\end{equation}
	\label{def:participant_fairness}
\end{definition}
Note that, $\epsilon$ can be thought of as a tolerance level; smaller the value of $\epsilon$, lesser is the disparity among participants, and fairer is the conference schedule for the participants.
\begin{definition}
	{\bf Participant Unfairness $\big(\Psi^\text{P}(\Gamma)\big)$:}
	The participant unfairness caused by a schedule $\Gamma$, is the smallest non-negative value of $\epsilon$ such that $\Gamma$ is $\epsilon$-fair for the participants.
	\begin{equation}
	\Psi^\text{P}(\Gamma)=\inf \big\{\epsilon \colon \epsilon\geq 0 \text{ }\& \text{ } \Gamma \text{ }\text{\it  is } \epsilon\text{\it -fair for participants} \big\}
	\end{equation}
\end{definition}
(Here, the notation $\inf\{\cdot\}$ represents infimum of a set.)
\begin{proposition}
	Participant unfairness metric from eq-\ref{eq:participant_unfairness} can be reduced as below.
	\begin{equation}
	\Psi^\text{P}(\Gamma)=\big\{\max_{p_i\in\mathcal{P}} NCG(p_i|\Gamma)\big\}-\big\{\min_{p_j\in\mathcal{P}} NCG(p_j|\Gamma)\big\}
	\label{eq:participant_unfairness_redn}
	\end{equation}
	\label{prop:participant_unfairness_redn}
\end{proposition}
Thus, to relate participant fairness notions with different $\epsilon$ values, we can write the following lemma.
\begin{lemma}
	If a schedule $\Gamma$ is $\epsilon'$-fair for the participants, then it is also $\epsilon$-fair for participants $\forall \epsilon\geq \epsilon'$.
	\noteng{The statement is confusing  $\epsilon$-fair is not well defined}
	\label{lemma:participant_fairness}
\end{lemma}
If pairwise disparities in participant satisfaction are less than $\epsilon'$, then they are also less than $\epsilon$, as $\epsilon'\leq \epsilon$. This proves lemma-\ref{lemma:participant_fairness}.
Here, the value of $\epsilon$ can also be thought of as the tolerance level for disparity or unfairness in participant satisfactions.
Given a schedule $\Gamma$, we can find the smallest possible $\epsilon$ and use that to represent how unfair $\Gamma$ is to the participants. 
We formally define the metric to measure participant unfairness below.
%
%======COMMENT ENDS
\fi 
\if 0
%====COMMENT STARTS
\begin{definition}
	{\bf \boldmath$\epsilon$-Fairness for Speakers}:
	For a non-negative real number $\epsilon$, a schedule $\Gamma$ is said to be $\epsilon$-fair for speakers iff the following condition is satisfied.
	\begin{equation}
	\abs{NEC(t_i|\Gamma)-NEC(t_j|\Gamma)} \leq \epsilon, \space \forall t_i,t_j\in \mathcal{T}
	\label{eq:speaker_fairness}
	\end{equation}
	\label{def:speaker_fairness}
\end{definition}

The lemma-\ref{lemma:participant_fairness} can be repeated, here, for speakers too.
\begin{lemma}
	If a schedule $\Gamma$ is $\epsilon'$-fair for the speakers, then it is also $\epsilon$-fair for speakers $\forall \epsilon\geq \epsilon'$.
	\label{lemma:speaker_fairness}
\end{lemma}
\begin{definition}
	{\bf Speaker Unfairness $\big(\Psi^\text{S}(\Gamma)\big)$:}
	The speaker unfairness caused by a schedule $\Gamma$, is the smallest non-negative value of $\epsilon$ such that $\Gamma$ is $\epsilon$-fair for the speakers.
	\begin{equation}
	\Psi^\text{S}(\Gamma)=\inf \big\{\epsilon \colon \epsilon\geq 0 \text{ }\& \text{ } \Gamma \text{ }\text{\it  is } \epsilon\text{\it -fair for speakers} \big\}
	\end{equation}
\end{definition}
\begin{proposition}
	Speaker  unfairness metric from \cref{eq:speaker_unfairness} can be reduced as below.
	\begin{equation}
	\Psi^\text{S}(\Gamma)=\big\{\max_{t_i\in\mathcal{T}} NEC(t_i|\Gamma)\big\}-\big\{\min_{t_j\in\mathcal{T}} NEC(t_j|\Gamma)\big\}
	\label{eq:speaker_unfairness_redn}
	\end{equation}
	\label{prop:speaker_unfairness_redn}
\end{proposition}
We now define the decision variant of the speaker fairness objective. 
%====COMMENT ENDS
\fi
\section{Balancing Efficiency and Fairness}
\label{sec:methodology}
In VCS (as defined in sec-\ref{sec:prelim}), the ultimate goal is to find a schedule $\Gamma$ that optimizes efficiency while minimizing participant unfairness and speaker unfairness (as defined in eq-\ref{eq:TEP}, eq-\ref{eq:participant_fairness_obj}, eq-\ref{eq:speaker_fairness_obj} respectively).
%
%Given the notions of social welfare (\cref{subsec:tep}) and fairness (participant fairness in \cref{subsec:p_fairness}, speaker fairness in \cref{subsec:s_fairness})---in conference scheduling, it is natural to ask whether it is feasible to achieve them together.
%
However, the individual objectives may be at loggerheads with each other, and simultaneous optimization of the three objectives
may not be possible. %for them could prove to be challenging and present disagreement between the objectives.
Thus, first in sec-\ref{subsec:tensions}, we highlight the potential conflicts in simultaneously ensuring efficiency and fairness.
Then, in sec-\ref{subsec:joint_opt}, we propose a joint optimization framework for the problem, and in sec-\ref{subsec:scale_up}, we propose two techniques to scale up the joint optimization for scheduling bigger conferences.
\subsection{Tension between Efficiency and Fairness}\label{subsec:tensions}
Through a set of claims, %(\ref{claim:1},\ref{claim:2},\ref{claim:3}), 
we illustrate some fundamental tensions between efficiency and fairness in VCS. Proofs are in the appendix.
%======TOY EXAMPLE 1================================
\begin{claim}
	In VCS, it is not always possible to gain participant fairness without losing efficiency.
	\label{claim:1}
\end{claim}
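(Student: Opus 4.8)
The plan is to exhibit a small explicit instance of VCS in which the unique efficiency-maximizing schedule $\Gamma^\text{EM}$ is not a minimizer of $\Psi^\text{P}$, and in which \emph{every} schedule that achieves optimal efficiency has strictly larger participant unfairness than some other (efficiency-suboptimal) schedule. Since the claim is a ``not always possible'' statement, a single counterexample suffices, so I would not attempt anything general. Concretely, I would take $|\mathcal{P}|=2$, $|\mathcal{T}|=2$, $|\mathcal{S}|=2$ so that there are exactly two schedules to compare, $\Gamma_1$ mapping $t_1\to s_1, t_2\to s_2$ and $\Gamma_2$ mapping $t_1\to s_2, t_2\to s_1$.

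The key steps, in order: (i) choose interest scores $V_{p}(t)$ and availability scores $A_{p}(s)$ in $[0,1]$ for the two participants so that $\Gamma_1$ strictly maximizes $TEP$ (using eq-\ref{eq:TEP}) while $\Gamma_2$ is strictly worse; (ii) compute $ICG_{p}$ for each participant via the rearrangement-inequality characterization given in sec-\ref{subsec:participant_satis}, then compute $NCG_{p_1},NCG_{p_2}$ under both schedules; (iii) show $\Psi^\text{P}(\Gamma_1) = |NCG_{p_1}(\Gamma_1)-NCG_{p_2}(\Gamma_1)| > |NCG_{p_1}(\Gamma_2)-NCG_{p_2}(\Gamma_2)| = \Psi^\text{P}(\Gamma_2)$, i.e.\ the efficient schedule is strictly less fair. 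The design idea is to make the two participants have ``opposed'' availability patterns (e.g.\ $p_1$ available in $s_1$, $p_2$ available in $s_2$) but unequal interest magnitudes, so that the $TEP$-optimal placement piles gain onto the participant who happens to align with the high-interest talk, skewing normalized satisfactions, whereas the swapped schedule equalizes $NCG$ at a small efficiency cost. I would then just report the arithmetic: pick numbers like $V_{p_1}(t_1)=1, V_{p_1}(t_2)=\tfrac12$, $V_{p_2}(t_1)=1, V_{p_2}(t_2)=\tfrac12$ with $A_{p_1}(s_1)=1,A_{p_1}(s_2)=\tfrac12$, $A_{p_2}(s_1)=\tfrac12,A_{p_2}(s_2)=1$ and check the inequalities, adjusting constants if the gap comes out zero.

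The main obstacle I anticipate is not conceptual but a bookkeeping hazard: because both $TEP$ and $\Psi^\text{P}$ depend on the same $V,A$ values, it is easy to pick numbers for which the efficient schedule is \emph{also} the fairest (or for which $NCG$ ties), making the counterexample collapse. So the real work is tuning the eight parameters so that the $TEP$ ordering of $\Gamma_1,\Gamma_2$ and the $\Psi^\text{P}$ ordering genuinely disagree, and ideally so that $\Gamma_2$ (the fair one) has $\Psi^\text{P}=0$ while $\Gamma_1$ has $\Psi^\text{P}>0$, which makes the statement maximally crisp. A secondary subtlety is confirming $\Gamma_1$ is \emph{the} efficiency maximizer (or at least that no efficiency maximizer is as fair as $\Gamma_2$) — with only two schedules this is immediate, but I would state it explicitly so the "not always possible" phrasing is clearly justified.
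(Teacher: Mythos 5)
Your approach is the same as the paper's: the claim is established by a single explicit counterexample in which every efficiency-maximizing schedule is strictly less fair than some efficiency-suboptimal one (the paper uses one talk, two participants with opposed availabilities, and a third ``compromise'' slot $s_2$ with $A_{p_1}(s_2)=A_{p_2}(s_2)=0.49$, so the fair schedule has $\Psi^\text{P}=0$ at a $TEP$ cost of $1\to 0.98$). One caution: the specific numbers you propose are fully symmetric under swapping the two participants and the two slots, so both schedules come out with identical $TEP$ ($2.25$) and identical unfairness ($0.2$) --- exactly the collapse you flagged; an asymmetric choice in your $2\times2\times2$ family does work (e.g.\ $V_{p_1}=V_{p_2}=(1,0)$, $A_{p_1}=(1,0.9)$, $A_{p_2}=(0.05,0.1)$ gives $TEP(\Gamma_1)=1.05>1=TEP(\Gamma_2)$ but $\Psi^\text{P}(\Gamma_1)=0.5>0.1=\Psi^\text{P}(\Gamma_2)$), so the argument goes through once the constants are tuned.
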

%
%We can also construct  examples to make the following two similar claims. 
%Note that, in the example given in \cref{tab:toy_example_1}, introducing participant fairness has caused a loss in social welfare, and also a loss in the speaker satisfaction $NEC(t)$ (also reduced from $1$ to $0.98$).
%
%Thus, it is important to ask: \textbf{Q1: \textit{to what extent the conference organizer is ready to lose social welfare and speaker satisfaction while bringing in participant fairness?}}
%
%======TOY EXAMPLE 2================================
\begin{claim}
	In VCS, it is not always possible to gain speaker fairness without losing efficiency.
	\label{claim:2}
\end{claim}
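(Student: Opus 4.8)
The plan is to exhibit a small, explicit instance of VCS in which every efficiency-maximizing schedule forces a strictly positive speaker unfairness, while some other (less efficient) schedule achieves zero speaker unfairness; this establishes that one cannot in general attain speaker fairness without sacrificing some efficiency. Following the pattern of Claim~\ref{claim:1}, I would keep the construction as minimal as possible: two talks $t_1,t_2$, two slots $s_1,s_2$, and a handful of participants whose interest scores $V_p(\cdot)$ and availability scores $A_p(\cdot)$ are chosen so that the two relevant schedules $\Gamma_A=\{t_1\mapsto s_1, t_2\mapsto s_2\}$ and $\Gamma_B=\{t_1\mapsto s_2, t_2\mapsto s_1\}$ are easy to compare by hand.

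First I would compute, for each talk $t$, the ideal expected crowd $IEC_t=\max_\Gamma EC_t(\Gamma)$, which by Definition in sec-\ref{subsec:speaker_satis} is just the larger of $EC_t$ over the two slot assignments; these normalizers are fixed numbers once the instance is written down. Next I would pick the scores so that the unique (or every) $TEP$-maximizing schedule is $\Gamma_A$ — i.e.\ $TEP(\Gamma_A) > TEP(\Gamma_B)$, using the fact from eq-\ref{eq:TEP} that $TEP(\Gamma)=\sum_t EC_t(\Gamma)$ — but under $\Gamma_A$ one talk sits in its ideal slot (giving $NEC=1$) while the other is far from its ideal slot (giving $NEC$ bounded well below $1$), so that $\Psi^\text{S}(\Gamma_A)>0$. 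Simultaneously I would arrange that $\Gamma_B$, although it has strictly smaller $TEP$, gives both talks the \emph{same} normalized expected crowd, so $\Psi^\text{S}(\Gamma_B)=0$. Concretely this means designing the participants in two groups with ``opposite'' availability patterns: one group available mainly in $s_1$ and interested mainly in $t_1$, another available mainly in $s_2$ and interested mainly in $t_2$, with the group sizes / score magnitudes tuned so the efficient schedule $\Gamma_A$ matches each group to its preferred slot (maximizing total crowd) but thereby makes the two talks' raw crowds unequal relative to their ideals, whereas the ``crossed'' schedule equalizes them.

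Once such an instance is fixed, the proof is just the two arithmetic checks: (i) verify $\Gamma_A=\argmax_\Gamma TEP(\Gamma)$ by comparing the two $TEP$ values (and noting no other one-to-one mapping exists for $n=l=2$), and (ii) verify $\Psi^\text{S}(\Gamma_A)>0=\Psi^\text{S}(\Gamma_B)$ by plugging into eq-\ref{eq:speaker_unfairness}. Since $\Gamma_A$ is the only efficiency maximizer and it is speaker-unfair, any schedule attaining $\Psi^\text{S}=0$ (such as $\Gamma_B$) must have strictly smaller $TEP$, which is exactly the claim.

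The main obstacle is the joint tuning of the parameters so that all three conditions hold at once: $\Gamma_A$ must \emph{strictly} beat $\Gamma_B$ in $TEP$, $\Gamma_A$ must be speaker-unfair, and $\Gamma_B$ must be exactly speaker-fair (the last requires the per-talk ratios $EC_t(\Gamma_B)/IEC_t$ to coincide, which is a single equation constraining the scores). I expect to handle this by first fixing a symmetric skeleton — two mirror-image participant groups with a symmetry swapping $(t_1,s_1)\leftrightarrow(t_2,s_2)$ — which automatically makes $\Gamma_B$ the fixed point of the symmetry and hence speaker-fair, and then perturbing one group's size or scores slightly to break the tie in favor of $\Gamma_A$ for $TEP$ while keeping $\Gamma_A$'s speaker unfairness bounded away from zero. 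A careful choice of small integer or simple fractional values should make every verification a one-line computation.
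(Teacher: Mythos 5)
Your overall strategy (an explicit small counterexample in which the efficiency-maximizing schedule is speaker-unfair and the fairer schedule is strictly less efficient) is the same as the paper's, but your concrete construction has a gap that would make it fail. The culprit is the normalization by $IEC_t$ in $NEC_t(\Gamma)=EC_t(\Gamma)/IEC_t$. In your ``two opposite groups'' design, the efficiency-maximizing schedule $\Gamma_A$ matches each talk to the slot where its own interested audience is most available --- i.e.\ to exactly the slot that attains $IEC_t$ --- so \emph{both} talks get $NEC=1$ and $\Psi^{\mathrm S}(\Gamma_A)=0$: there is no unfairness to exhibit. Raw crowds being unequal does not help, because the per-talk normalization absorbs it. Worse, with $n=l=2$ one can check all cases of the signs of $a_1-a_2$ and $b_1-b_2$ (where $a_i=EC_{t_1}(s_i)$, $b_i=EC_{t_2}(s_i)$) and see that whenever a schedule with $\Psi^{\mathrm S}=0$ exists, some schedule is simultaneously perfectly speaker-fair and $TEP$-maximizing; so the configuration you are aiming for (unique efficiency maximizer unfair, alternative exactly fair) is unattainable in a $2\times 2$ instance. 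Your fallback of a mirror symmetry swapping $(t_1,s_1)\leftrightarrow(t_2,s_2)$ makes things worse, not better: the symmetry forces $a_1=b_2$ and $a_2=b_1$, hence $\Psi^{\mathrm S}(\Gamma_A)=\Psi^{\mathrm S}(\Gamma_B)=0$ at the symmetric point, and a small perturbation keeps $\Psi^{\mathrm S}(\Gamma_A)$ near zero rather than bounded away from it (indeed, a perturbation that boosts $t_1$'s crowd in $s_1$ leaves both talks in their ideal slots under $\Gamma_A$, which stays perfectly fair while winning on $TEP$).

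Two repairs are available. The first is to drop the demand that the alternative schedule be \emph{exactly} fair: the claim only requires that reducing $\Psi^{\mathrm S}$ costs efficiency, so it suffices that the minimum-unfairness schedule has strictly smaller $TEP$ than the strictly-more-unfair maximizer (e.g.\ $a_1=10$, $a_2=8$, $b_1=1$, $b_2=0.5$, realizable with two participant types, gives $\Psi^{\mathrm S}(\Gamma_A)=0.5>0.2=\Psi^{\mathrm S}(\Gamma_B)$ while $TEP(\Gamma_A)=10.5>9=TEP(\Gamma_B)$). The second is the paper's route: use a single participant, two talks with unequal interest scores, and \emph{three} slots (tab-\ref{tab:toy_example_2}). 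With one participant both talks share the same ideal slot, so they cannot both reach $NEC=1$; the efficiency maximizer gives $NEC$ values $1$ and $0.8$, while the fairest schedule deliberately leaves the most valuable slot unused to obtain $NEC$ values $0.8$ and $0.75$, at a loss in $TEP$ from $1.4$ to $1.175$. The essential ingredient you are missing is this competition of talks for a common ideal slot (or, equivalently, surplus slots that a fairness-seeking schedule can afford to waste), without which the $IEC$ normalization neutralizes the tension you are trying to create.
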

\if 0
%=====COMMENT STARTS
There are two important points to note from the example in tab-\ref{tab:toy_example_2}:
(i) the most fair solution for speakers leaves a very valuable slot $s_1$---with the highest overall availability scores for participants---unused thereby losing a huge opportunity for larger participation;
(ii) speaker fairness has introduced a loss in social welfare ($TEP(\Gamma^\text{SW})=1.4$ to $TEP(\Gamma)=1.175$) and also a loss in participant satisfaction ($NCG(p|\Gamma^\text{SW})=1$ to $NCG(p|\Gamma)\approx0.84$).
Thus, it is important to ask; \textbf{Q2: \textit{to what extent the conference organizer is ready to lose social welfare and participant satisfaction while bringing in speaker fairness?}}
%=====COMMENT ENDS
\fi 
%======TOY EXAMPLE 3================================
\begin{claim}
	In VCS, it is not always possible to gain speaker fairness without losing participant fairness and vice-versa.
	\label{claim:3}
\end{claim}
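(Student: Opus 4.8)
The plan is to exhibit a single small instance of VCS in which no schedule can simultaneously achieve both $\Psi^\text{S}(\Gamma)=0$ (perfect speaker fairness) and $\Psi^\text{P}(\Gamma)=0$ (perfect participant fairness), and moreover in which the schedule minimizing $\Psi^\text{S}$ has strictly positive $\Psi^\text{P}$ and the schedule minimizing $\Psi^\text{P}$ has strictly positive $\Psi^\text{S}$. Since there are only finitely many schedules for a fixed finite instance, it suffices to tabulate them all and compare. I would take $|\mathcal{T}|=|\mathcal{S}|=2$ (so there are exactly two schedules, $\Gamma_1$ and $\Gamma_2=$ the swap) and a small number of participants, say $|\mathcal{P}|=2$, which keeps the case analysis to a handful of arithmetic comparisons.

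The key steps, in order: (i) choose concrete values $V_p(t)$ and $A_p(s)$ for the two participants, two talks, and two slots; (ii) compute, for each of the two schedules, the four quantities $CG_p(\Gamma)$ and $EC_t(\Gamma)$, then the normalizers $ICG_p$ and $IEC_t$ (using the Rearrangement-inequality characterization of $\Gamma_p^*$ recalled in sec-\ref{subsec:participant_satis} and the argmax characterization of $\Gamma_t^*$ in sec-\ref{subsec:speaker_satis}), hence the normalized satisfactions $NCG_p(\Gamma)$ and $NEC_t(\Gamma)$; (iii) read off $\Psi^\text{P}(\Gamma_i)$ and $\Psi^\text{S}(\Gamma_i)$ from Definitions \ref{def:participant_unfairness} and \ref{def:speaker_unfairness}; (iv) verify that $\Psi^\text{P}$ is minimized by one schedule and $\Psi^\text{S}$ by the other, with the ``off'' objective strictly positive in each case --- this simultaneously shows gaining speaker fairness forces a loss in participant fairness and vice versa. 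The design principle behind picking the numbers: make one talk ``universally liked'' and the other ``polarizing'', and make one slot convenient for participant $1$ while the other is convenient for participant $2$, so that equalizing the participants' normalized gains pulls the polarizing talk into one slot, while equalizing the two talks' normalized crowds pulls it into the other.

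The main obstacle is the interaction between the two normalizations: because $ICG_p$ and $IEC_t$ are themselves maxima over schedules, a change in the raw numbers that helps balance $NCG$ can shift a normalizer and inadvertently rebalance $NEC$ as well, collapsing the intended conflict. To guard against this I would first fix the availability profile so that $ICG_{p_1}=ICG_{p_2}$ and $IEC_{t_1}=IEC_{t_2}$ are forced (e.g. by a symmetry between the two slots' column sums), reducing the claim to a statement purely about the raw $CG$ and $EC$ values, and only then perturb the interest scores to break the participant/speaker symmetry in opposite directions. Once the instance is pinned down, the remaining verification is a short finite check that I would present as a two-row table of values for $\Gamma_1$ and $\Gamma_2$, concluding that $\arg\min_\Gamma \Psi^\text{P} \neq \arg\min_\Gamma \Psi^\text{S}$ and that neither minimum drives the other objective to $0$, which is exactly Claim \ref{claim:3}.
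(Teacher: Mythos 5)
Your overall strategy is the paper's: exhibit one small instance, enumerate its finitely many schedules, and check that the minimizer of $\Psi^\text{P}$ and the minimizer of $\Psi^\text{S}$ are different schedules, each leaving the other objective strictly positive. The paper's instance (tab-(c) in the appendix) has $|\mathcal{P}|=|\mathcal{T}|=2$ but, crucially, $|\mathcal{S}|=4$: the speaker-fair schedule parks $t_1$ in a slot reachable only by $p_1$ and $t_2$ in a slot reachable only by $p_2$ (equal normalized crowds, unequal participants), while the participant-fair schedule uses one slot both can fully attend and one slot both can barely attend (equal participants, unequal crowds). The two competing schedules occupy disjoint slot pairs, and the normalizers $IEC_t$ are attained at a slot the speaker-fair schedule leaves empty; that slack is what decouples the participant-side and speaker-side comparisons.

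The gap is that you commit to $|\mathcal{T}|=|\mathcal{S}|=2$ and never produce the numbers, and in that regime the construction you sketch provably cannot be completed. With two talks and two slots there are only $\Gamma_1$ and its swap $\Gamma_2$, and $TEP(\Gamma)=\sum_p CG_p(\Gamma)=\sum_t EC_t(\Gamma)$. If $\Psi^\text{P}(\Gamma_1)=0$ is achieved non-degenerately (every participant at her $ICG$), then $TEP(\Gamma_1)\geq TEP(\Gamma_2)$; if $\Psi^\text{S}(\Gamma_2)=0$ is achieved non-degenerately, then $TEP(\Gamma_2)\geq TEP(\Gamma_1)$; and equality forces universal indifference, making all four unfairness values zero. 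So the picture your design principle aims at---each schedule perfectly fair for one side---is unattainable when $n=l$. Your mitigation also backfires: realizing the equal-column-sum symmetry as $A_{p_1}(s_1)=A_{p_2}(s_2)=1$, $A_{p_1}(s_2)=A_{p_2}(s_1)=c$, and setting $P=V_{p_1}(t_1)+V_{p_1}(t_2)-V_{p_2}(t_1)-V_{p_2}(t_2)$, $T=V_{p_1}(t_1)+V_{p_2}(t_1)-V_{p_1}(t_2)-V_{p_2}(t_2)$, $\delta^\text{P}(\Gamma)=CG_{p_1}(\Gamma)-CG_{p_2}(\Gamma)$, $\delta^\text{S}(\Gamma)=EC_{t_1}(\Gamma)-EC_{t_2}(\Gamma)$, a direct expansion gives $\delta^\text{P}(\Gamma_1)^2-\delta^\text{P}(\Gamma_2)^2=(1-c^2)PT=\delta^\text{S}(\Gamma_1)^2-\delta^\text{S}(\Gamma_2)^2$; under your intended normalizer equalities both unfairness measures are then proportional to these raw gaps, so the \emph{same} schedule minimizes both---the exact opposite of a conflict. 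The fix is cheap: allow $l>n$ and adopt an instance like the paper's, or at minimum exhibit and verify concrete numbers, since for a claim of this form the instance is the entire proof.
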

%
%An important question, here, is: \textbf{\textit{Q3: to what extent the conference organizer is ready to lose participant fairness while bringing in speaker fairness and vice-versa?}}
%
From the above three claims, we can say that improving on one of the three objectives might cause losses in the other two objectives. 
%both fairness and satisfaction of participants and speakers could often come in conflict with each other and with the overall social welfare in such conference scheduling problems.
%
Thus, while designing a conference schedule, a suitable balance among the two fairness objectives and the efficiency objective should be maintained. 
Hence to attain such a balance, next, we propose a framework to jointly optimize these objectives.
\subsection{Joint Optimization for Efficiency and Fairness (mFairConf)}\label{subsec:joint_opt}
We propose {\bf m(ultistakeholder)\bf FairConf}, a joint optimization framework  which combines fairness objectives with efficiency.
{\small\begin{multline}
\argmax_\Gamma \frac{TEP(\Gamma)}{mn}+ \lambda_1\times \bigg\{\Big\{\min_{p_j\in\mathcal{P}} NCG(p_j|\Gamma)\Big\}-\Big\{\max_{p_i\in\mathcal{P}} NCG(p_i|\Gamma)\Big\}\bigg\}\\
+\lambda_2\times \bigg\{\Big\{\min_{t_j\in\mathcal{T}} NEC(t_j|\Gamma)\Big\}-\Big\{\max_{t_i\in\mathcal{T}} NEC(t_i|\Gamma)\Big\}\bigg\}
\label{eq:joint_opt}
\end{multline}}
Here we normalize the efficiency objective to bring all the three components to similar scales;
i.e., $TEP$ is divided by $|\mathcal{P}|\cdot|\mathcal{T}|=mn$ (it is the maximum possible value for $TEP$---occurs when $V_p(t)=A_p(s)=1$, $\forall p,t,s$).
We also reverse the fairness objective functions from eq-\ref{eq:participant_fairness_obj} and eq-\ref{eq:speaker_fairness_obj} while inserting them in eq-\ref{eq:joint_opt} as it features $\argmax$ instead of $\argmin$, and use $\lambda_1,\lambda_2$ as weights for participant fairness and speaker fairness respectively.

%We, now,
We take a matrix $X$ of dimensions $|\mathcal{T}|\times|\mathcal{S}|$.
Each element of $X$: $X_{t,s}$ is a binary indicator variable for talk $t\in \mathcal{T}$ being scheduled in slot $s\in \mathcal{S}$, i.e., $X_{t,s}=1$ if $t$ is scheduled in $s$ and $0$ otherwise.
Now to operationalize the joint optimization objective in eq-\ref{eq:joint_opt}, we express it as an integer program in eq-\ref{eq:ip}.
The first constraint is the integrality constraint.
Second constraint ensures that, each talk gets scheduled exactly once.
On the other hand, one slot can be allocated to atmost one talk which is ensured by the third constraint.

{\small\begin{multline}
\argmax_X \text{   }\space \frac{1}{mn}\sum\limits_{t\in\mathcal{T}}\sum\limits_{p\in\mathcal{P}} \sum\limits_{s\in\mathcal{S}}V_p(t)\cdot A_p(s)\cdot X_{t,s}\\ 
+\lambda_1\Bigg[\min_{p_j\in \mathcal{P}} \sum_{t\in\mathcal{T}}\sum_{s\in\mathcal{S}}\frac{V_{p_j}(t)A_{p_j}(s)}{ICG(p_j)}X_{t,s}-\max_{p_i\in \mathcal{P}}\sum_{t\in\mathcal{T}}\sum_{s\in\mathcal{S}}\frac{V_{p_i}(t)A_{p_i}(s)}{ICG(p_i)} X_{t,s}\Bigg]\\
+\lambda_2\Bigg[\min_{t_j\in \mathcal{T}} \sum_{p\in\mathcal{P}}\sum_{s\in\mathcal{S}}\frac{V_{p}(t_j)A_{p}(s)}{IEC(t_j)} X_{t_j,s}-\max_{t_i\in \mathcal{T}}\sum_{p\in\mathcal{P}}\sum_{s\in\mathcal{S}}\frac{V_{p}(t_i) A_{p}(s)}{IEC(t_i)} X_{t_i,s}\Bigg]\\
\text{s.t.   }X_{t,s}\in \{0,1\} \text{   }\forall t\in\mathcal{T}, s\in\mathcal{S}\\
\sum_{s\in \mathcal{S}}X_{t,s}=1,\text{   }\forall t\in\mathcal{T}\\
\sum_{t\in \mathcal{T}}X_{t,s}\leq1,\text{   }\forall s\in\mathcal{S}\\
\label{eq:ip}
\end{multline}}
%
%
%\footnote{Even though we solve the joint optimization problem by converting it into an integer program which works well for problems of small size, more research is needed in developing approaches %to solve which can be that are more scalable and computationally efficient. In this paper, we use the integer program approach and focus more towards empirically bringing out fundamental tensions, trade-offs and difficulties involved in fair conference scheduling problem.}
%
\subsection{Scaling Up the Joint Optimization}\label{subsec:scale_up}
As the joint objective is NP-hard (theorem-\ref{theorem:participant_fairness_np_hard}), %\cref{theorem:speaker_fairness_np_hard}), 
scaling the integer program (as given in eq-\ref{eq:ip}) to big conferences with large number of talks and participants, would need huge computing resources.
Thus, we provide a rounding heuristic (in sec-\ref{subsubsec:RRFS}), and a clustering approach (in sec-\ref{subsubsec:part_clustering}) which can significantly reduce the time and computing resources needed for fair VCS. 
%(as we empirically show in \cref{sec:experiments}).
%
\subsubsection{\bf Repeated Rounding of Fractional Solutions (RRFS)}\label{subsubsec:RRFS}
We propose a repeated rounding heuristic to approximately solve the joint optimization in eq-\ref{eq:ip}. 
We first ignore the integrality constraint ($X_{t,s}\in \{0,1\}$ in eq-\ref{eq:ip}), and replace it with a general non-negativity constraint ($X_{t,s}\geq0$) to get a fractional solution for decision variable $X$.
Note that even though finding an integer solution is NP-hard, finding a fractional solution is polynomial time solvable ($\mathcal{O}(l^3)$, where $l$ is number of slots)~\cite{kuhn1955hungarian}.
%\noteng{does this scale up the solution - we must mention that}
%
%\tb{We then apply a rounding approach which uses signals from the fractional solution until no clear signal remains}.
%\noteng{line not clear}
%
We, then, find the maximum element from the fractional solution (let it be $X_{t,s}$), schedule talk $t$ in slot $s$, and replace all other elements in the same row as $t$ and column as $s$ with $0$. 
%\noteng{Dont you run the optimization again with smaller number of nodes}
Then, we move to the next maximum from the remaining elements and repeat the same process till %all the remaining elements are 
every other element becomes zero or all talks are scheduled. 
If %at some point the remaining elements become zero, we, then, 
some talks remain unscheduled when there is no non-zero element, we filter the unscheduled talks and slots, and repeat the same process of finding a fractional solution and rounding.
This method is detailed in alg-\ref{alg:RRFS} in the appendix, and has the worst case time complexity $\mathcal{O}\big(nl(l^2+mn)\big)$ when $l\leq2^{n^2}$.
\subsubsection{\bf Participant Clustering (PC)}\label{subsubsec:part_clustering}
In big conferences, although the number of talks and slots stay limited or may not grow too much, the number of participants could become very high leading to high memory complexity even to get a fractional solution.
%The relaxation of the integrality constraint in \cref{eq:ip} may not be enough to scale the joint optimization to a very large number of participants, (more details in the appendix).
%
Thus, for such cases, we propose to group similar participants into clusters as a pre-processing strategy.
%(see appendix for the lower bound on the scale of reduction in memory complexity).
% \tb{In this paper, we have done the clustering by the folllowing two processes. }
%
We concatenate the interest and availability scores of a participant to create the participant's profile vector;
i.e., participant $p$'s profile is $[V_p(t)\forall t\in \mathcal{T}: A_p(s)\forall s\in \mathcal{S}]$.
We, then, apply $k$-means clustering to group similar participants, and use the cluster centroids as the representative participant profiles in the mFairConf while adding multiplicative weights---same as the size of the corresponding clusters---only in the efficiency and speaker fairness terms of the objective (as these two depend on the true audience participation values).
%\gourab{Need to write about clustering participants first based on availability and then based on interests (if it works and gives better results)}
%\noteng{What about doing in terms of just speaker's interest}
%\noteng{there is one more important point, the number of  different availability is much less than the number of interest}
%The cluster centroids are considered as the representative participant profiles and  multiplicative weights proportional to the size of the corresponding clusters are added in the welfare and speaker fairness terms of the objective (as these two depend on the true audience participation values)
%and then the modified joint optimization framework is run.
%
%===================================================================
%%%%%%%%%%%%%%%%%%%%%%%%%%%%%SYNTHETIC-uniform_random===============
%===================================================================
\begin{figure*}[t!]
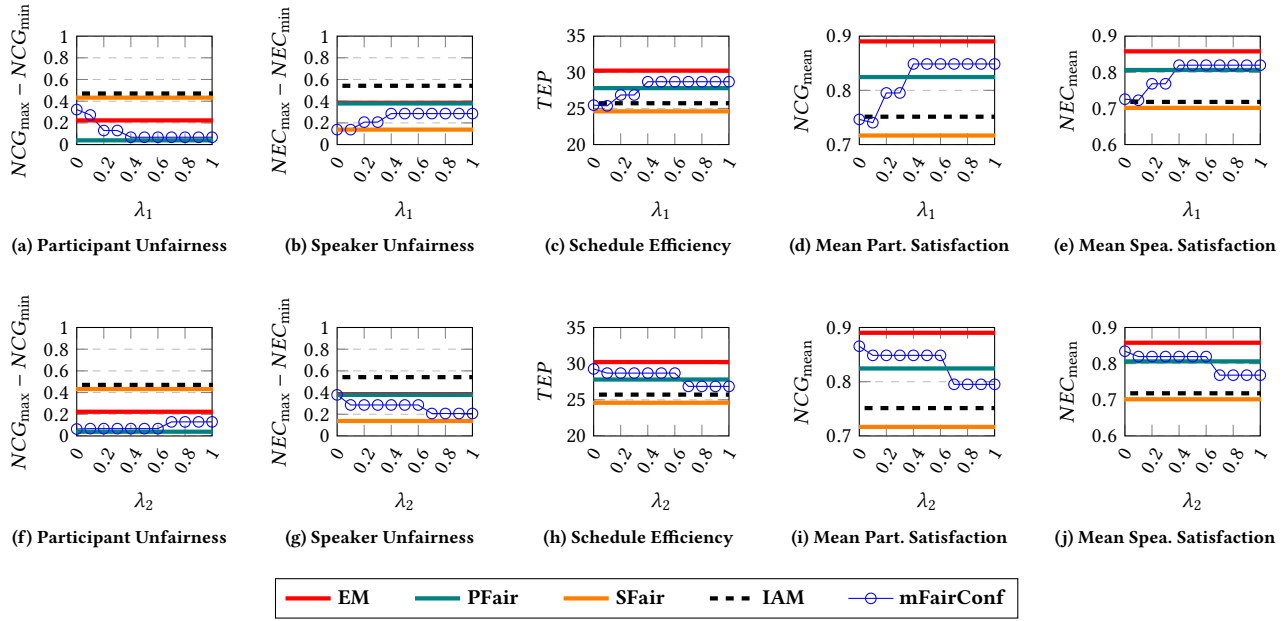

	\center{\small
		\subfloat[{\bf Participant Unfairness}]{\pgfplotsset{width=0.19\textwidth,height=0.17\textwidth,compat=1.9}\input{figures/syn_part_unfair_lam1}\label{fig:syn_part_unfair_lam1}}
		%\hfil\subfloat[{\bf Metric: $NCG_\text{gini}$}]{\pgfplotsset{width=0.18\textwidth,height=0.19\textwidth,compat=1.9}\input{figures/syn_part_gini_lam1}\label{fig:syn_part_gini_lam1}}
		\hfil
		\subfloat[{\bf Speaker Unfairness}]{\pgfplotsset{width=0.19\textwidth,height=0.17\textwidth,compat=1.9}\input{figures/syn_spea_unfair_lam1}\label{fig:syn_spea_unfair_lam1}}
		\hfil
		\subfloat[{\bf Schedule Efficiency}]{\pgfplotsset{width=0.19\textwidth,height=0.17\textwidth,compat=1.9}
			\input{figures/syn_TEP_lam1}\label{fig:syn_TEP_lam1}}		
		\hfil
		\subfloat[{\bf Mean Part. Satisfaction}]{\pgfplotsset{width=0.19\textwidth,height=0.17\textwidth,compat=1.9}
			\input{figures/syn_part_mu_lam1}\label{fig:syn_part_mu_lam1}}
		\hfil
		\subfloat[{\bf Mean Spea. Satisfaction}]{\pgfplotsset{width=0.19\textwidth,height=0.17\textwidth,compat=1.9}
			\input{figures/syn_spea_mu_lam1}\label{fig:syn_spea_mu_lam1}}
		%\hfil\subfloat[{\bf Metric: $NEC_\text{gini}$}]{\pgfplotsset{width=0.18\textwidth,height=0.19\textwidth,compat=1.9}\input{figures/syn_spea_gini_lam1}\label{fig:syn_spea_gini_lam1}}
		\hfil
		\subfloat[{\bf Participant Unfairness}]{\pgfplotsset{width=0.19\textwidth,height=0.17\textwidth,compat=1.9}\input{figures/syn_part_unfair_lam2}\label{fig:syn_part_unfair_lam2}}
		\hfil
		\subfloat[{\bf Speaker Unfairness}]{\pgfplotsset{width=0.19\textwidth,height=0.17\textwidth,compat=1.9}\input{figures/syn_spea_unfair_lam2}\label{fig:syn_spea_unfair_lam2}}
		\hfil
		\subfloat[{\bf Schedule Efficiency}]{\pgfplotsset{width=0.19\textwidth,height=0.17\textwidth,compat=1.9}
			\input{figures/syn_TEP_lam2}\label{fig:syn_TEP_lam2}}	
		%\hfil\subfloat[{\bf Metric: $NCG_\text{gini}$}]{\pgfplotsset{width=0.18\textwidth,height=0.19\textwidth,compat=1.9}\input{figures/syn_part_gini_lam2}\label{fig:syn_part_gini_lam2}}
		\hfil
		\subfloat[{\bf Mean Part. Satisfaction}]{\pgfplotsset{width=0.19\textwidth,height=0.17\textwidth,compat=1.9}
			\input{figures/syn_part_mu_lam2}\label{fig:syn_part_mu_lam2}}
		\hfil
		\subfloat[{\bf Mean Spea. Satisfaction}]{\pgfplotsset{width=0.19\textwidth,height=0.17\textwidth,compat=1.9}
			\input{figures/syn_spea_mu_lam2}\label{fig:syn_spea_mu_lam2}}
		%\hfil\subfloat[{\bf Metric: $NEC_\text{gini}$}]{\pgfplotsset{width=0.19\textwidth,height=0.19\textwidth,compat=1.9}\input{figures/syn_spea_gini_lam2}\label{fig:syn_spea_gini_lam2}}
		\vfil
		\subfloat{\pgfplotsset{width=.7\textwidth,compat=1.9}
			\begin{tikzpicture}
			\begin{customlegend}[legend entries={{\bf EM},{\bf PFair},{\bf SFair},{\bf IAM}, {\bf mFairConf}},legend columns=5,legend style={/tikz/every even column/.append style={column sep=0.5cm}}]
			\addlegendimage{red,mark=.,ultra thick,sharp plot}
			\addlegendimage{teal,mark=.,ultra thick,sharp plot}
			\addlegendimage{orange,mark=.,ultra thick,sharp plot}
			\addlegendimage{dashed,black,mark=.,ultra thick,sharp plot}
			\addlegendimage{blue,mark=o,sharp plot}
			%\addlegendimage{green,mark=o,sharp plot}
			%\addlegendimage{violet,mark=square,sharp plot}
			\end{customlegend}
			\end{tikzpicture}}
	}%\vspace{-2mm}
	\caption{Results on synthetic dataset. For the plots in the first row, $\lambda_2$ is fixed at $0.5$, and $\lambda_1$ is varied. For the plots in second row, $\lambda_1$ is fixed at $0.5$, and $\lambda_2$ is varied.}\label{fig:syn_random}
	%\vspace{-2mm}
\end{figure*}
\section{Experimental Evaluation}
\label{sec:experiments}
%In this section, we present baselines, introduce the metrics for evaluations and then show how our proposal compares with the baselines along the presented metrics.
%
\subsection{Experimental Setup}
\subsubsection{\bf Datasets:}\label{subsec:datasets}
We use real-world and synthetic datasets.

\noindent {\bf Real-world Datasets:}
We consider three computer science conferences: Workshop on Responsible Recommendation (FATREC), ACM conference on Recommender Systems (RecSys), and International Conference on Machine Learning (ICML).
Note that these conferences fall in small, medium and large conference categories respectively, and they help us to evaluate not only mFairConf, but also the proposed scaling up approaches.
%on medium and large conferences.
While we gather data on participant availabilities and timezones from publicly available sources (released by the organizers of $2020$ conference), true participant interests are not available.
Thus, we consider the list of talks (published papers) from $2017$ edition of the conferences as the talk set, and then rely on randomized signals from the total number of citations or views of the papers to sample participant interest scores.
Next, we describe each dataset in more details. 

\textit{\textbf{i. RECSYS:}}
%Here, we consider the ACM conference on Recommender Systems (RecSys).
We collect the participant data (timezone data of $1112$ participants) released in the welcome note of $2020$ edition of the conference.
We consider each participant to be completely available only during their working time (i.e., $\forall p$, $A_p(s)=1$ if $s$ is in between $9$AM to $5$PM in local timezone) of a day, and otherwise not available at all.
%Due to the unavailability of participant interest data for $2020$ edition, 
%We consider the $2017$ edition of the conference , and 
We gather the published papers ($26$ papers) in the $2017$ edition (\url{https://recsys.acm.org/recsys17/}) along with their citation counts which we use as a proxy for overall participant interest for a talk (i.e., a paper).
For each participant-talk pair, we sample the interest scores from the a Bernoulli distribution $V_p(t)\sim Bernoulli\big(prob=\frac{\#cites(t)}{\max_{t'\in\mathcal{T}} \#cites(t')}\big), \forall p,t$.
We consider $48$ half hour slots over a $24$ hour period (starting from $00$hours in UTC), and try to schedule the $26$ talks.
%In total, the dataset has
%\gourab{Size: 1112,26,48, all half hour slots}

\textit{\textbf{ii. FATREC:}}
Here, we consider the workshop on responsible recommendation which is organized in conjunction with the RecSys conference.
We collect the accepted papers ($11$ papers) from $2017$ edition of the workshop (\url{https://piret.gitlab.io/fatrec/}) along with their citation counts.
We use the citation counts to sample the participants' interest scores just as we do in case of RECSYS, and separately also from a normal distribution $V_p(t)\sim Normal\big(mean=\frac{\#cites(t)}{\max_{t'\in\mathcal{T}} \#cites(t')},std=\frac{\text{mean}}{4}\big), \forall p,t$; however due to space constraints we present results only on the dataset derived from Normal distribution.
For availability scores, we downsample $40$ participants from the RECSYS participants set.
We consider $96$ fifteen-minute-slots (slot-size is same as in the $2017$ schedule) over a $24$ hour period (starting from $00$hours in UTC), and try to schedule the $11$ talks.
%\gourab{Size: 40,11,24, 96 15-min slots}

\textit{\textbf{iii. ICML:}}
We use the publicly available survey responses \cite{icml2020virtual} from the participants of the International Conference on Machine Learning for the participant availability scores.
We keep the number of participants same as the number of respondents in the survey, which is $2722$.
We, then, gather all papers from the $2017$ edition of the conference (\url{https://icml.cc/Conferences/2017}) which are listed in ACM digital library \cite{10.5555/3305381} along with the number of times each paper is downloaded.
Similar to RECSYS, for each participant-talk pair, we sample the interest score from a Bernoulli distribution. %$V_p(t)\sim Bernoulli\big(prob=\frac{\#downloads(t)}{\max_{t'\in\mathcal{T}} \#downloads(t')}\big), \forall p,t$.
For the $209$ listed papers, we consider $240$ half hour slots over a period of $5$ days (starting from $00$hours in UTC), and schedule the talks.

\noindent {\bf Synthetic Dataset:}
We synthesize a dataset typically mimicking a small conference, and use it to understand and illustrate the dynamics of different methods.
We take number of participants ($\abs{\mathcal{P}}) = 10$, number of talks ($\abs{\mathcal{T}}) = 10$, number of slots ($\abs{\mathcal{S}}) = 10$, and generate the synthetic dataset where the slots represent non-overlapping equal-sized time intervals
%--- not particularly required to be in any sequence--- 
available for scheduling.
The interest scores and availability scores are then sampled from a uniform random distribution in $[0,1]$; i.e., $V_p(t)\sim \text{Uniform}([0,1])$ and $A_p(s)\sim \text{Uniform}([0,1])$, $\forall p,t,s$.
\if 0
\begin{table}\footnotesize
	\begin{tabular}{|c|c|c|c|}
		\hline
		{\bf Dataset} & {\bf \#Participants} & {\bf \#Speakers} & {\bf \#Slots}\\\cline{1-4}
		Synthetic & $10$ & $10$ & $10$ \\\cline{1-4}
		FATREC & $40$ & $11$ & $96$ \\\cline{1-4}
		RECSYS & $1112$ & $26$ & $48$ \\\cline{1-4}
		ICML & $2722$ & $209$ & $240$ \\\cline{1-4}
	\end{tabular}
	\caption{Effective sizes of datasets}
	\label{tab:data_table}
	%\vspace{-4mm}
\end{table}
\fi
\subsubsection{\bf Baselines:}\label{subsec:baselines}
We use the following baselines and empirically compare them with our approach mFairConf from sec-\ref{subsec:joint_opt}. %(further on referred to as {\bf mFairConf}).
~\\\textbf{\textit{A.}} Efficiency Maximization ({\bf EM}): 
Following the trend in prior works on optimal meeting scheduling \cite{garrido1996multi,capek2008event,maheswaran2004taking,pino1998scheduling,chun2003optimizing}, here, we just optimize the schedule for efficiency; i.e., $\Gamma^\text{EM}$ or $\argmax_\Gamma TEP(\Gamma)$ without any fairness consideration.
~\\\textbf{\textit{B.}} Participant Fairness Maximization ({\bf PFair}):
Here, we just optimize for participant fairness; i.e., minimize participant unfairness [$\argmin_\Gamma \Psi^\text{P}(\Gamma)$] as defined in eq-\ref{eq:participant_fairness_obj}.
~\\\textbf{\textit{C.}} Speaker Fairness Maximization ({\bf SFair}):
Here, we just optimize for speaker fairness; i.e., minimize speaker unfairness [$\argmin_\Gamma \Psi^\text{S}(\Gamma)$] as defined in eq-\ref{eq:speaker_fairness_obj}.
~\\\textbf{\textit{D.}} Interest-Availability Matching ({\bf IAM}): 
Here, we sort the talks in descending order of the overall interest scores received by them, i.e., $\sum_{p\in \mathcal{P}}V_p(t)$, and the slots in descending order of the overall availability scores received by them, i.e., $\sum_{p\in \mathcal{P}}A_p(s)$.
Now, we assign the talk with the highest overall interest score to the slot with the highest overall availability score, the talk with the second highest overall interest score to the slot with the second highest overall availability score, and so on (with random tie-breaks).
IAM is one of the naive alternatives when scheduling is done manually (as natural objectives like EM usually need computing resources).
It is also worth noting that, in the usual physical conference settings (i.e., all participants have identical ease of availability over all available slots $A_p(s)=A(s)$, $\forall s\in \mathcal{S},p\in \mathcal{P}$), {\bf IAM} yields a conference schedule which maximizes efficiency. It is a special case of lemma-\ref{lemma:V_or_A_same} (refer to case (a) of lemma-\ref{lemma:V_or_A_same} for the proof).
\begin{lemma}{\bf IAM} maximizes efficiency, if the participants are identical either in terms of their interests in the talks or in terms of their ease of availability over the available slots, or both.\label{lemma:V_or_A_same}
\end{lemma}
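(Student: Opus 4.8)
The plan is to exploit the fact that $TEP$ is linear and separable in the participant index, and to reduce each case to the rearrangement inequality plus a one-line exchange argument. Recall $TEP(\Gamma)=\sum_{p\in\mathcal{P}}\sum_{t\in\mathcal{T}}V_p(t)A_p(\Gamma(t))$, and that all $V_p,A_p$ are nonnegative. I would treat the two symmetric hypotheses separately, and then note the "both" case is subsumed.

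\textbf{Case (a): identical availability.} Suppose $A_p(s)=A(s)$ for all $p,s$ (this is precisely the physical-conference special case). Swapping the order of summation gives $TEP(\Gamma)=\sum_{t\in\mathcal{T}}A(\Gamma(t))\sum_{p\in\mathcal{P}}V_p(t)=\sum_{t\in\mathcal{T}}A(\Gamma(t))\,\overline{V}(t)$, where $\overline{V}(t):=\sum_{p\in\mathcal{P}}V_p(t)\ge 0$. Two observations finish the case. First, since every $\overline{V}(t)\ge0$ and every $A(s)\ge0$, an optimal schedule may be taken to use the $n$ slots with the largest $A$-values: if a schedule leaves such a slot $s^\star$ unused while using a slot $s'$ with $A(s')<A(s^\star)$, then reassigning the talk at $s'$ to $s^\star$ changes $TEP$ by $\overline{V}(\cdot)\bigl(A(s^\star)-A(s')\bigr)\ge0$. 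Second, once the used slot set is fixed, the rearrangement inequality \cite{hardy1967inequalities} says $\sum_t A(\Gamma(t))\overline{V}(t)$ is maximized by pairing the talk of largest $\overline{V}$ with the slot of largest $A$, the second-largest with the second-largest, and so on. Since $\sum_{p}A_p(s)=m\,A(s)$ and $\sum_{p}V_p(t)=\overline{V}(t)$, the IAM orderings (by $\sum_p A_p(s)$ and by $\sum_p V_p(t)$) coincide with the orderings by $A(s)$ and $\overline{V}(t)$, so IAM assigns talks to the top $n$ slots and matches them in sorted order — exactly the rearrangement-optimal pairing. Random tie-breaks are harmless, since permuting equal $\overline{V}$ or equal $A$ entries leaves the sum unchanged. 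Hence $TEP(\Gamma^{\text{IAM}})=\max_\Gamma TEP(\Gamma)$.

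\textbf{Case (b): identical interests.} If instead $V_p(t)=V(t)$ for all $p,t$, the same manipulation yields $TEP(\Gamma)=\sum_{t\in\mathcal{T}}V(t)\,\overline{A}(\Gamma(t))$ with $\overline{A}(s):=\sum_{p}A_p(s)$, and the argument of case (a) applies verbatim with the roles of $V$ and $A$ exchanged. \textbf{Case (c):} if both hypotheses hold, case (a) (equivalently (b)) already applies, so the claim follows.

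I do not expect a serious obstacle here; the substantive content is the short exchange argument justifying that an optimum uses the top-$n$ slots, together with the observation that IAM's sorting keys $\sum_p V_p(t)$ and $\sum_p A_p(s)$ differ from $V(t)$ and $A(s)$ only by the positive factor $m$ and hence induce the same orderings — so that IAM genuinely realizes the rearrangement-optimal matching. The only things to state carefully are that nonnegativity of $V_p,A_p$ is what legitimizes the "use the best slots" step, and that $n\le l$ so IAM's assignment of the first $n$ sorted talks to the first $n$ sorted slots is well defined.
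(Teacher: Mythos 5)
Your proof is correct and follows essentially the same route as the paper's: the same three-case split, the same algebraic reduction of $TEP(\Gamma)$ to a sum of products of an overall interest score and an overall availability score, and the same appeal to the rearrangement inequality to conclude that IAM's sorted matching is optimal. Your additions (the exchange argument showing an optimum uses the top-$n$ slots when $n\le l$, the observation that IAM's sorting keys differ from $A(s)$ and $V(t)$ only by the positive factor $m$, and the remark on tie-breaks) merely make explicit details the paper leaves implicit.
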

%
%Note that, we use random tie breaks in all the cases. 
%when there are ties in scenarios mentioned in \cref{lemma:V_or_A_same}, IAM approach could also be made to give multiple solutions which maximize efficiency;
%however, 
% 
%achieved by the conference schedules obtained by our approach and the baseline approaches to gauge the loss of efficiency in comparison to the maximum possible efficiency. We use  for this.
%
%\gourab{2722,209,240, all half hour slots}
\subsubsection{\bf Evaluation Metrics:}\label{subsec:metrics}
Apart from the {\bf fairness} metrics ($\Psi^\text{P}(\Gamma)=NCG_\text{max}-NCG_\text{min}$ as in definition-\ref{def:participant_unfairness}, $\Psi^\text{S}(\Gamma)=NEC_\text{max}-NEC_\text{min}$ as in definition-\ref{def:speaker_unfairness}), we also measure the mean satisfaction of participants and speakers ($NCG_\text{mean}$ and $NEC_\text{mean}$), and {\bf efficiency} ($TEP$ as in sec-\ref{subsec:tep}) as indicators of {\bf efficiency}.
Even though our unfairness metrics ($\Psi^\text{P}(\Gamma)$, $\Psi^\text{S}(\Gamma)$) are based on max-min differences for simplicity in modeling, they are quite vulnerable to participants or speakers with niche profiles especially in big conferences. 
Thus, we also use gini index \cite{gini1912variabilita} to measure the overall inequality in individual participant and speaker satisfactions ($NCG_\text{gini}$) as a measure of overall unfairness in bigger datasets (RECSYS and ICML).
We use cvxpy (\url{https://www.cvxpy.org/}) paired with Gurobi (\url{https://www.gurobi.com/}) solver for the optimization.
System details are: Debian GNU/Linux 10 on AMD64 architecture, Python 2.7.16, cvxpy 1.0.21, gurobipy 9.1.1, numpy 1.16.2, scikit-learn 0.20.3.
\subsection{Experimental Results}\label{subsec:syn_experiments}
%We show results on synthetic dataset in \cref{subsubsec:syn_random}, and
%Then, in \cref{subsubsec:syn_segA,subsubsec:syn_segA_Imb,subsubsec:syn_segV,subsubsec:syn_segV_Imb}, we experiment on some special cases of synthetic datasets---with specific patterns in the data---to illustrate interesting nuances and difficulties involved in the scheduling problem.
%realworld datasets in \cref{subsubsec:FATREC,subsubsec:RECSYS,subsubsec:ICML}.
%We vary the hyperparameters $\lambda_1$ and $\lambda_2$ in between $0$ and $1$ in separate trials, but report results only for selected settings due to space constraints.
%
\subsubsection{\bf Results on the Synthetic Dataset:}\label{subsubsec:syn_random}
We plot the results for synthetic dataset in fig-\ref{fig:syn_random}.
Note that, unlike mFairConf, the baseline approaches do not have hyperparameters $\lambda_1,\lambda_2$;
thus, baseline results are just horizontal straight lines while mFairConf's results vary with hyperparameter settings.
~\\{\bf Baseline Results:}
EM achieves the highest expected participation $TEP$ (by definition it should), the highest mean participant satisfaction and mean speaker satisfaction (refer figs-\ref{fig:syn_part_mu_lam1},\ref{fig:syn_spea_mu_lam1},\ref{fig:syn_TEP_lam1}) while performing poorly on participant and speaker fairness (figs-\ref{fig:syn_part_unfair_lam1},\ref{fig:syn_spea_unfair_lam1}).
On the other hand, the naive IAM performs poorly in all the metrics.
As PFair optimizes only for participant fairness, it has the highest participant fairness (least unfairness in fig-\ref{fig:syn_part_unfair_lam1}) while losing in all the other metrics.
Similarly, SFair performs the best in speaker fairness (least unfairness in fig-\ref{fig:syn_spea_unfair_lam1}) while losing in all other metrics as SFair optimizes only for speaker fairness.
~\\{\bf mFairConf Results:}
Note that, for the plots in first row (fig-\ref{fig:syn_part_unfair_lam1} to fig-\ref{fig:syn_spea_mu_lam1}), we fix $\lambda_2=0.5$ and vary $\lambda_1$ from $0$ to $1$;
for the plots in second row (figs \ref{fig:syn_part_unfair_lam2} to \ref{fig:syn_spea_mu_lam2}), we fix $\lambda_1=0.5$ and vary $\lambda_2$ from $0$ to $1$.
The general trends observed in mFairConf's results are: with increase in the weight for participant fairness ($\lambda_1$), mFairConf achieves better participant fairness (fig-\ref{fig:syn_part_unfair_lam1}) but worse speaker fairness (fig-\ref{fig:syn_spea_unfair_lam1});
with increase in the weight for speaker fairness ($\lambda_2$), mFairConf achieves better speaker fairness (fig-\ref{fig:syn_spea_unfair_lam2}) but worse participant fairness (increase in participant unfairness in fig-\ref{fig:syn_part_unfair_lam2}).
We find that mFairConf with the setting of $\lambda_1=\lambda_2=0.5$ gives a balanced performance across all the metrics;
it performs good in both participant fairness (very small unfairness in fig-\ref{fig:syn_part_unfair_lam1}-- close to PFair) and speaker fairness (very small unfairness in fig-\ref{fig:syn_spea_unfair_lam1}-- slightly higher than SFair) while causing only marginal losses in mean participant satisfaction (fig-\ref{fig:syn_part_mu_lam1}), mean speaker satisfaction (fig-\ref{fig:syn_spea_mu_lam1}), and the efficiency (fig-\ref{fig:syn_TEP_lam1}).
Note that, here, we do not imply that $\lambda_1=\lambda_2=0.5$ will always give a balanced performance from mFairConf; instead such a hyperparameter setting will be dataset-specific, and one needs to find it out through exploration.
Moreover, a conference organizer may not always want a fully balanced schedule; she may even set the hyperparameters as per her relative priorities towards participants and speakers.
%==========================================
%===============FATREC Results=============
%==========================================
\begin{figure}
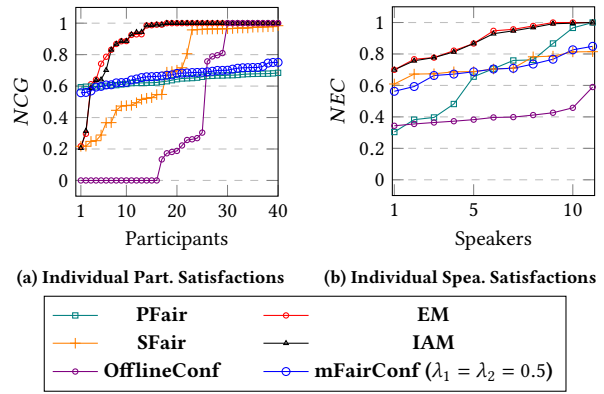

	\center{\small
		\if 0 %%%OLD PLOTS START
		\subfloat[{\bf Individual Part. Satisfactions-O}]{\pgfplotsset{width=0.24\textwidth,height=0.23\textwidth,compat=1.9}
			\input{figures/fatrec_part_distribution_old}\label{fig:fatrec_part_distribution_old}}
		\hfil
		\subfloat[{\bf Individual Spea. Satisfactions-O}]{\pgfplotsset{width=0.24\textwidth,height=0.23\textwidth,compat=1.9}
			\input{figures/fatrec_spea_distribution_old}\label{fig:fatrec_spea_distribution_old}}
		\hfil %%%OLD PLOTS END
		\fi 
		\subfloat[{\bf Individual Part. Satisfactions}]{\pgfplotsset{width=0.24\textwidth,height=0.23\textwidth,compat=1.9}
			\input{figures/fatrec_part_distribution}\label{fig:fatrec_part_distribution}}
		\hfil
		\subfloat[{\bf Individual Spea. Satisfactions}]{\pgfplotsset{width=0.24\textwidth,height=0.23\textwidth,compat=1.9}
			\input{figures/fatrec_spea_distribution}\label{fig:fatrec_spea_distribution}}
		\vfil
		\vspace{-3mm}
		\subfloat{\pgfplotsset{width=.7\textwidth,compat=1.9}
			\begin{tikzpicture}
			\begin{customlegend}[legend entries={{\bf PFair},{\bf EM},{\bf SFair},{\bf IAM}, {\bf OfflineConf}, {\bf mFairConf ($\lambda_1=\lambda_2=0.5$)}},legend columns=2,legend style={/tikz/every even column/.append style={column sep=0.5cm}}]
			\addlegendimage{teal,mark=square,mark size=1pt,sharp plot}
			\addlegendimage{red,mark=o,mark size=1pt,sharp plot}
			\addlegendimage{orange,mark=+,mark size=2pt,sharp plot}
			\addlegendimage{black,mark=triangle,mark size=1pt,sharp plot}
			\addlegendimage{violet,mark=o,mark size=1pt,sharp plot}
			\addlegendimage{blue,mark=o,mark size=1.5pt,sharp plot}
			%\addlegendimage{green,mark=o,sharp plot}
			
			\end{customlegend}
			\end{tikzpicture}}
	}%\vspace{-2mm}
	\caption{Individual participant and speaker satisfactions (sorted in increasing order) in FATREC dataset.}\label{fig:fatrec_distribution}
	%\vspace{-3mm}
\end{figure}
%[hbt!]
\begin{table}\footnotesize
	\begin{tabular}{|c|c|c|c|c|}
		\hline
		{\bf Methods} & \vtop{\hbox{\strut $NCG_\text{max}$}\hbox{\strut $-NCG_\text{min}$}} & $NCG_\text{mean}$& \vtop{\hbox{\strut $NEC_\text{max}$}\hbox{\strut $-NEC_\text{min}$}} & $NEC_\text{mean}$\\\cline{1-5}
		PFair&$0.09$&$0.64$&$0.69$&$0.66$\\\cline{1-5}
		SFair&$0.76$&$0.70$&$0.20$&$0.72$\\\cline{1-5}
		EM&$0.78$&$0.91$&$0.30$&$0.89$\\\cline{1-5}
		IAM&$0.79$&$0.90$&$0.30$&$0.88$\\\cline{1-5}
		OfflineConf&$1$&$0.40$&$0.24$&$0.40$\\\cline{1-5}
		\vtop{\hbox{\strut mFairConf}\hbox{\strut ($\lambda_1=\lambda_2=0.5$)}} &$0.19$&$0.66$&$0.28$&$0.70$\\\cline{1-5}
		%\multirow{2}*{\bf Participants} & \multicolumn{1}{c|}{\bf $V_p(t)$} & \multicolumn{3}{c|}{\bf $A_p(s)$}\\\cline{1-5}
		%& $t$ & $s_1$ & $s_2$& $s_3$\\\cline{1-5}
		%$p_1$ & $1$ & $1$& $0.49$ &$0$ \\\cline{1-5}
		%$p_2$  & $1$ &$0$ & $0.49$&$1$\\\cline{1-5}   
	\end{tabular}
	\caption{\bf FATREC results (rounded upto two decimal points).}
	\label{tab:fatrec_results}
	%\vspace{-4mm}
\end{table}
%==========================================
%===============RECSYS Results=============
%==========================================
%====================================================================================
%=======================RECSYS PLOTS=================================================
%====================================================================================
\begin{figure}
	\center{\small
		\subfloat[{\bf Individual Part. Satisfactions}]{\pgfplotsset{width=0.24\textwidth,height=0.23\textwidth,compat=1.9}
			\input{figures/recsys_part_distribution}\label{fig:recsys_part_distribution}}
		\hfil
		\subfloat[{\bf Individual Spea. Satisfactions}]{\pgfplotsset{width=0.24\textwidth,height=0.23\textwidth,compat=1.9}
			\input{figures/recsys_spea_distribution}\label{fig:recsys_spea_distribution}}
		\vfil
		\vspace{-3mm}
		\subfloat{\pgfplotsset{width=.7\textwidth,compat=1.9}
			\begin{tikzpicture}
			\begin{customlegend}[legend entries={{\bf PFair-RRFS},{\bf EM},{\bf SFair-RRFS},{\bf IAM}, {\bf mFairConf-RRFS ($\lambda_1=\lambda_2=0.05$)}},legend columns=2,legend style={/tikz/every even column/.append style={column sep=0.5cm}}]
			\addlegendimage{teal,mark=*,mark size=0.3pt,sharp plot}
			\addlegendimage{red,mark=*,mark size=0.3pt,sharp plot}
			\addlegendimage{orange,mark=*,mark size=0.3pt,sharp plot}
			\addlegendimage{black,mark=*,mark size=0.3pt,sharp plot}
			%\addlegendimage{violet,mark=o,mark size=1pt,sharp plot}
			\addlegendimage{blue,mark=*,mark size=0.3pt,sharp plot}
			%\addlegendimage{green,mark=o,sharp plot}
			
			\end{customlegend}
			\end{tikzpicture}}
	}%\vspace{-2mm}
	\caption{Individual participant and speaker satisfactions (sorted in increasing order) in RECSYS dataset.}\label{fig:recsys_distribution}
	%\vspace{-2mm}
\end{figure}
\begin{table}\footnotesize
	\begin{tabular}{|c|c|c|c|c|}
		\hline
		%&\multicolumn{2}{c|}{Participant-side} &\multicolumn{1}{c|}{Speaker-side}\\\cline{2-5}
		{\bf Methods} & $NCG_\text{gini}$ & $NCG_\text{mean}$& $NEC_\text{gini}$ & $NEC_\text{mean}$\\\cline{1-5}
		PFair-RRFS&$0.17$&$0.34$&$0.21$&$0.65$\\\cline{1-5}
		SFair-RRFS&$0.26$&$0.45$&$0.01$&$0.89$\\\cline{1-5}
		EM&$0.23$&$0.46$&$0.03$&$0.9$\\\cline{1-5}
		IAM&$0.25$&$0.44$&$0.04$&$0.86$\\\cline{1-5}
		%OfflineConf&$1$&$0.41$&$0.58$&$0.39$\\\cline{1-5}
		\vtop{\hbox{\strut mFairConf-RRFS}\hbox{\strut ($\lambda_1=0.05,\lambda_2=0$)}} &$0.18$&$0.43$&$0.07$&$0.85$\\\cline{1-5}
		\vtop{\hbox{\strut mFairConf-RRFS}\hbox{\strut ($\lambda_1=\lambda_2=0.05$)}} &$0.19$&$0.43$&$0.06$&$0.87$\\\cline{1-5}
		\vtop{\hbox{\strut mFairConf-RRFS}\hbox{\strut ($\lambda_1=0,\lambda_2=0.05$)}} &$0.22$&$0.45$&$0.03$&$0.9$ \\\cline{1-5}
		%\multirow{2}*{\bf Participants} & \multicolumn{1}{c|}{\bf $V_p(t)$} & \multicolumn{3}{c|}{\bf $A_p(s)$}\\\cline{1-5}
		%& $t$ & $s_1$ & $s_2$& $s_3$\\\cline{1-5}
		%$p_1$ & $1$ & $1$& $0.49$ &$0$ \\\cline{1-5}
		%$p_2$  & $1$ &$0$ & $0.49$&$1$\\\cline{1-5}   
	\end{tabular}
	\caption{RECSYS results (rounded upto two decimal places).}
	\label{tab:recsys_results}
	%\vspace{-4mm}
\end{table}
\begin{figure}[t!]
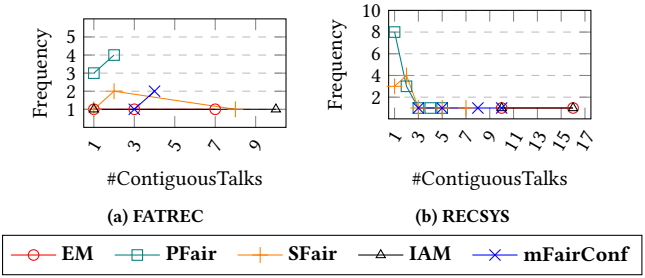

	\center{\small
		\subfloat[{\bf FATREC}]{\pgfplotsset{width=0.24\textwidth,height=0.17\textwidth,compat=1.9}
			\input{figures/fatrec_contiguity}\label{fig:fatrec_contiguity}}
		\hfil
		\subfloat[{\bf RECSYS}]{\pgfplotsset{width=0.24\textwidth,height=0.17\textwidth,compat=1.9}
			\input{figures/recsys_contiguity}\label{fig:recsys_contiguity}}
		\vfil
		\vspace{-3mm}
		\subfloat{\pgfplotsset{width=.7\textwidth,compat=1.9}
			\begin{tikzpicture}
			\begin{customlegend}[legend entries={{\bf EM},{\bf PFair},{\bf SFair},{\bf IAM}, {\bf mFairConf}},legend columns=5,legend style={/tikz/every even column/.append style={column sep=0.2cm}}]
			\addlegendimage{red,mark=o,mark size=2pt,sharp plot}
			\addlegendimage{teal,mark=square,mark size=2pt,sharp plot}
			\addlegendimage{orange,mark=+,mark size=3pt,sharp plot}
			\addlegendimage{black,mark=triangle,mark size=2pt,sharp plot}
			%\addlegendimage{violet,mark=o,mark size=1pt,sharp plot}
			\addlegendimage{blue,mark=x,mark size=3pt,sharp plot}
			%\addlegendimage{green,mark=o,sharp plot}
			
			\end{customlegend}
			\end{tikzpicture}}
	}%\vspace{-2mm}
	\caption{Schedule Contiguity. (FATREC: mFairConf with ($\lambda_1=\lambda_2=0.5$); RECSYS: mFairConf uses RRFS with ($\lambda_1=\lambda_2=0.05$), PFair and SFair use RRFS)}\label{fig:sch_contiguity}
	%\vspace{-3mm}
\end{figure}
%==========================================
%===============ICML Results=============
%==========================================
\begin{figure*}
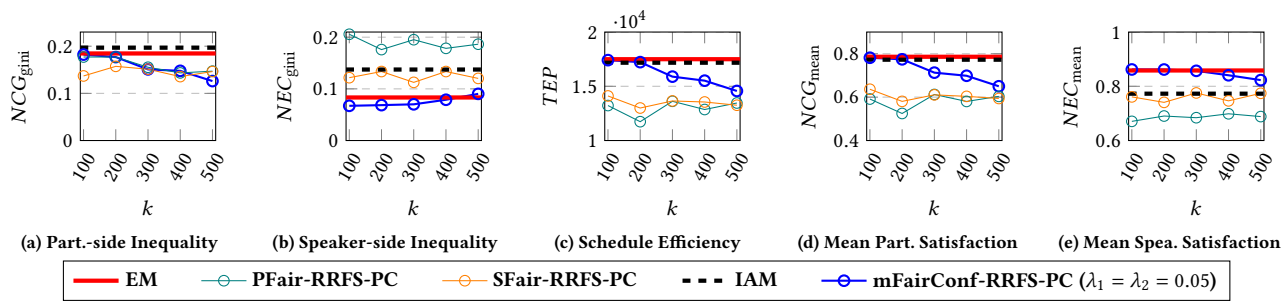

	\center{\small
		\subfloat[{\bf Part.-side Inequality}]{\pgfplotsset{width=0.19\textwidth,height=0.17\textwidth,compat=1.9}
			\input{figures/icml_part_gini}\label{fig:icml_part_gini}}
		\hfil
		\subfloat[{\bf Speaker-side Inequality}]{\pgfplotsset{width=0.19\textwidth,height=0.17\textwidth,compat=1.9}
			\input{figures/icml_spea_gini}\label{fig:icml_spea_gini}}
		\hfil
		\subfloat[{\bf Schedule Efficiency}]{\pgfplotsset{width=0.19\textwidth,height=0.17\textwidth,compat=1.9}
			\input{figures/icml_TEP}\label{fig:icml_TEP}}
		\hfil
		\subfloat[{\bf Mean Part. Satisfaction}]{\pgfplotsset{width=0.19\textwidth,height=0.17\textwidth,compat=1.9}
			\input{figures/icml_part_mu}\label{fig:icml_part_mu}}
		\hfil
		\subfloat[{\bf Mean Spea. Satisfaction}]{\pgfplotsset{width=0.19\textwidth,height=0.17\textwidth,compat=1.9}
			\input{figures/icml_spea_mu}\label{fig:icml_spea_mu}}		
		\vfil
		\vspace{-3mm}
		\subfloat{\pgfplotsset{width=.7\textwidth,compat=1.9}
			\begin{tikzpicture}
			\begin{customlegend}[legend entries={{\bf EM},{\bf PFair-RRFS-PC},{\bf SFair-RRFS-PC},{\bf IAM},{\bf mFairConf-RRFS-PC ($\lambda_1=\lambda_2=0.05$)}},legend columns=5,legend style={/tikz/every even column/.append style={column sep=0.5cm}}]
			\addlegendimage{red,mark=.,ultra thick,sharp plot}
			\addlegendimage{teal,mark=o,mark size=2pt,sharp plot}
			\addlegendimage{orange,mark=o,mark size=2pt,sharp plot}
			\addlegendimage{dashed,black,mark=.,ultra thick,sharp plot}
			\addlegendimage{blue,mark=o,thick,sharp plot}
			%\addlegendimage{green,mark=o,sharp plot}
			%\addlegendimage{violet,mark=square,sharp plot}
			\end{customlegend}
			\end{tikzpicture}}
	}\vspace{-2mm}
	\caption{Results on ICML dataset.}\label{fig:icml_cluster_variations}
	\vspace{-2mm}
\end{figure*}
\subsubsection{\bf Results on FATREC Dataset:}\label{subsubsec:FATREC}
In fig-\ref{fig:fatrec_part_distribution} and fig-\ref{fig:fatrec_spea_distribution}, we plot the individual participant satisfactions and speaker satisfactions---both sorted in increasing order---in FATREC dataset.
We list relevant metric values in tab-\ref{tab:fatrec_results}.
Along with the baselines and mFairConf, we also evaluate the real FATREC-2017 workshop schedule (referred to as OfflineConf).
%Here we show the results mFairConf with setting $\lambda_1=\lambda_2=0.5$.

\noindent {\bf Baseline Results:}
While EM and IAM are able to ensure fairness on speaker-side (low inequality in $NEC$s: tab-\ref{tab:fatrec_results}), they cause huge participant unfairness ($NCG_\text{max}-NCG_\text{max}$ in tab-\ref{tab:fatrec_results}).
This is because, the number of talks ($11$) is very few in comparison to a total of $96$ available slots, and there are enough number of slots favorable to crowds from either European or American timezones---covering majority of participants;
so both EM and IAM are able to achieve high satisfaction for all speakers by just scheduling their talks in the slots favorable to either of the majority participant groups while undermining the minority participant group from non-European and non-American timezones.
Similarly, SFair also undermines the minority participant group (fig-\ref{fig:fatrec_part_distribution}).
On the other hand, PFair significantly flattens the participant-side curve (fig-\ref{fig:fatrec_part_distribution}) thereby being the most fair for participants;
however it comes at a price---huge unfairness on the speaker-side (fig-\ref{fig:fatrec_spea_distribution}, tab-\ref{tab:fatrec_results}).

\noindent {\bf OfflineConf and mFairConf Results:}
As FATREC-2017 was held in-person in Como, Italy, its schedule was local timezone-specific.
If the same schedule were to be used in case of an online version, it would favour the participants mostly from nearby European timezones while severely undermining the participants from distant timezones (refer OfflineConf in fig-\ref{fig:fatrec_part_distribution}).
Such timezone-specific schedule also leaves no chance for the talks to be scheduled in slots with optimal availability of other majority participant groups, which then leads to highly suboptimal and unfair schedule for the speakers too (refer OfflineConf in tab-\ref{tab:fatrec_results}).
While the baselines and timezone-specific schedule are proving to be less suitable for online conferences, mFairConf, on the other hand, with $\lambda_1=\lambda_2=0.5$ setting strikes a good balance by significantly reducing the max-min gap for the participants---thereby improving inclusivity--- while still maintaining good speaker satisfaction and fairness similar to EM and SFair (tab-\ref{tab:fatrec_results}).
\subsubsection{\bf Results on RECSYS Dataset:}\label{subsubsec:RECSYS}
While the number of talks in RECSYS is small, it has a very high number of participants.
Thus, we use the proposed scalable approach (RRFS as in sec-\ref{subsubsec:RRFS}).
Note that, due to the hardness of PFair and SFair, we compute them also through repeated rounding of fractional solutions.
Similar to FATREC, here, we plot the individual participant and speaker satisfactions in fig-\ref{fig:recsys_distribution}, and metric values in tab-\ref{tab:recsys_results}.
Both EM and IAM result in schedules with high inequality for the participants (fig-\ref{fig:recsys_part_distribution}), i.e., high participant unfairness (tab-\ref{tab:recsys_results}).
While PFair reduces the inequality on the participant-side, it increases the speaker-side inequality (fig-\ref{fig:recsys_spea_distribution});
SFair behaves completely the opposite way.
In this dataset, we find  a balanced performance from mFairConf with $\lambda_1=\lambda_2=0.05$, i.e., reduction in max-min gaps on participants and speaker sides (also low gini index) without much degradation in overall satisfactions (tab-\ref{tab:recsys_results}).
This also serves as empirical evidence for the efficacies of scalable RRFS approach for big conferences.
\subsubsection{\bf Schedule Contiguity Analysis:}
In physical conferences, the talks are  clubbed together in smaller numbers and scheduled in  a contiguous manner.
Such a strict contiguity need not be maintained in the virtual setting; however, 
if the talks are scheduled in a very segregated manner, participants might lose interest.
Thus, we plot the frequencies of differently clubbed contiguous talks in fig-\ref{fig:sch_contiguity}.
We find that PFair results in a very segregated (high number of singular talks) schedule as it tries to fairly satisfy participants from all timezones.
However, both EM and IAM give schedules where most of the talks are clustered around time slots favorable to majority of participants.
SFair, however, leads to schedules with a mix of large number of contiguous talks and a few number of singular talks.
mFairConf, on the other hand, cares about efficiency and fairness simultaneously, thus clubs small number of talks and schedules them in the slots which are often at the intersection of availability intervals of participants from different timezones.
\subsubsection{\bf Results on ICML Dataset:}\label{subsubsec:ICML}
As ICML has a very large number of participants, we first cluster the participants as proposed in sec-\ref{subsubsec:part_clustering}, and then apply RRFS (as in sec-\ref{subsubsec:RRFS}) for mFairConf joint optimization.
%Similarly PFair and SFair also need to be implemented using RRFS after participant clustering.
In fig-\ref{fig:icml_cluster_variations}, we plot the results on ICML dataset.
Similar to previous datasets, here also, we observe mFairConf to be performing in a more balanced manner than the baselines.
%Moreover, it is important to note the effects of 
We capture the changes in performance by varying $k$ (number of participant clusters).
While we observe no specific trend in the performances of the baselines by varying $k$, there is a clear trend in mFairConf's performance;
with other settings fixed, an increase in $k$ results in an increase participant fairness (decreasing inequality in fig-\ref{fig:icml_part_gini}), and subsequently a decrease in speaker fairness (increasing inequality in fig-\ref{fig:icml_spea_gini}).
This is because higher $k$ leads to smaller participant clusters where the centroids are able to better represent all the participants leading to better participant fairness which shifts some talks to the favorable slots of previously less-represented participants thereby decreasing speaker fairness.
Note that an increase in participant fairness, here, also comes with losses in efficiency, and individual participant and speaker satisfactions (figs-\ref{fig:icml_TEP},\ref{fig:icml_part_mu},\ref{fig:icml_spea_mu}).
\subsubsection{\bf Priority Scheduling and Repetitions:}\label{subsubsec:priority_experiments}
Sometimes the conference organizers might have varying priorities towards the talks (e.g. short vs. long, main vs. special track).
In such cases, multiple instances of mFairConf in asynchronous setup can be used to schedule different groups of talks based on their priority levels.
Top-priority talks can be scheduled first by allowing their mapping to any possible slot, and followed by scheduling lower priority talks in the remaining slots.
This allows us to bring intra-(priority)-group fairness for speakers.
Moreover such an asynchronous setup can also allow for repetitions of top-priority talks once a round of scheduling is done.
We test such priority scheduling and repetitions on RECSYS dataset.
We group the talks into three priority levels (top, medium and low) of equal size based on their overall interest scores, and then asynchronously schedule them.
While detailed results are in appendix tab-\ref{tab:priority_scheduling}, we highlight some important findings.
(i) In comparison to a full-scale mFairConf, priority scheduling achieves better participant satisfaction since the top level talks get slots with more participant availability as they do not have to compete with low priority talks any more, and it also gives better intra-group fairness for speakers.
(ii) With repetitions of talks, participant satisfaction slowly increases and the schedules get more fair for participants.
(iii) Priority-based repetitions also increase the total expected crowd at the talks thereby increasing speaker satisfaction.
In fact, with prioritized repetitions of talks, the speakers may even have access to more audience ($NEC$ more than $1$) than what they would have received in their single best slot.
Even though the gaps between two assigned slots for a talk can be upto $12$ hours, the speakers often have incentive of getting more total audience through repetitions.
Note that with repetitions, the inequality on speaker-side might increase since the set of good slots which can ensure speaker fairness would already have been allocated in the first schedule, and the remaining slots, to be allocated for repetitions, may have varying effects on speaker satisfactions.
\section{Conclusion}\label{sec:discussion}
In this work, we modeled a very timely and important problem of virtual conference scheduling with efficiency and fairness concerns.
Apart from the formal definitions, we brought out fundamental tensions among participant fairness, speaker fairness, and efficiency. 
%
%We also empirically brought out these tensions and difficulties, 
We experimentally showed that the proposed joint optimization framework, mFairConf, can find balanced conference schedules and generate schedules as per an organizer's relative priorities towards participants and speakers.
We note some of the limitations of the present work and possible future directions in the appendix.
\if 0
~\\{\bf Future Work:}
%In this paper, 
(i) We limited ourselves in modeling single-track conference scenario;
however, larger conferences often have parallel sessions to accommodate a higher number of talks.
Accounting for such overlapping slots would require changes in the formulation of participant and speaker satisfactions as participants would have to choose which of the parallel sessions to attend, thereby also changing the expected audience in a talk.
(ii) We modeled speaker satisfaction using the expected crowd at her talk, however, the speaker's 
convenience in the assigned time slot could also play a role and can be modeled accordingly;
(iii) Although we considered the participants and speakers to be separate agents in our model, a single agent could be both a participant and a speaker;
thus for such agents with dual roles, the participant satisfaction measure needs to be suitably modified.
Finally, as time passes, we are learning more and more about the various
challenges and oppportunities brought by virtual conferences; we believe that this paper is a first small step towards appreciating them and addressing the core issues.
%
%(iv) Another important aspect could be group fairness for partcipants: that is to ensure fair satisfaction for timezone-specific groups of participants, as well as domain-specific groups of speakers etc. Group fairness notions form a long line of work in machine learning \cite{mehrabi2019survey}; these works can be explored and extended for this purpose.
\fi  
~\\{\bf Project Repository:}
\url{https://github.com/gourabkumarpatro/FairConf}.
~\\{\bf Acknowledgments:} G. K Patro acknowledges the support by TCS Research Fellowship. This research was supported in part by ERC Grants for ``Foundations for Fair Social Computing" (agreement no. 789373), and ``NoBIAS - Artificial Intelligence without Bias" (agreement no. 860630) funded under the EU's Horizon 2020.
% with proper instructions on how to navigate through the folder, run codes and reproduce the experiments.
%Upon acceptance, we plan to make all the resources publicly available.
%{\small (Full url: \url{https://drive.google.com/drive/folders/1IPO05ARAqz1ZOzFgla_uAnugpN_Wc4X6?usp=sharing})}
\if 0
\gourab{\begin{itemize}
	\item Parallel/Overlapping Slots
	\item Repetitions (limited) of talks in different slots
	\item Accounting for speaker availability too
	\item Group fairness for participants (time-zone-specific groups)
	\item Group fairness for speakers (domain/area specific groups)
	\item Grouping of a number of talks in consecutive slots to maintain a sort of coherence 
	\item Accounting for agents with dual roles of participant and speaker
	\item (added)Theoretical exploration for efficient solutions 
\end{itemize}}
\fi

\balance
\bibliographystyle{ACM-Reference-Format}
\bibliography{Main}
\clearpage
\appendix
\begin{table*}[t]\footnotesize
	\caption{Examples}
	\vspace{-3mm}
	\subfloat[\bf Example Problem 1]{\begin{tabular}{*{5}{|c|c|c|c|c}p{5em}}
		\hline
		\multirow{2}*{\bf Participants} & \multicolumn{1}{c|}{\bf $V_p(t)$} & \multicolumn{3}{c|}{\bf $A_p(s)$}\\\cline{2-5}
		& $t$ & $s_1$ & $s_2$& $s_3$\\\cline{1-5}
		$p_1$ & $1$ & $1$& $0.49$ &$0$ \\\cline{1-5}
		$p_2$  & $1$ &$0$ & $0.49$&$1$\\\cline{1-5}   
	\end{tabular}
	\label{tab:toy_example_1}}
	\hfil
	\subfloat[\bf Example Problem 2]{\begin{tabular}{*{6}{|c|c|c|c|c|c}p{5em}}
		\hline
		\multirow{2}*{\bf Participants} & \multicolumn{2}{c|}{\bf $V_p(t)$} & \multicolumn{3}{c|}{\bf $A_p(s)$}\\\cline{2-6}
		& $t_1$ & $t_2$ & $s_1$ & $s_2$& $s_3$\\\cline{1-6}
		$p$ & $1$&$0.5$ &$1$ &$0.75$ &$0.8$ \\\cline{1-6}
		%$p_2$  & $1$&$0.5$ & $1$&$0.75$ & $0.8$\\\cline{1-6}   
	\end{tabular}
	\label{tab:toy_example_2}}
	\hfil
	\subfloat[\bf Example Problem 3]{\begin{tabular}{*{7}{|c|c|c|c|c}p{5em}}
			\hline
			\multirow{2}*{\bf Participants} & \multicolumn{2}{c|}{\bf $V_p(t)$} & \multicolumn{4}{c|}{\bf $A_p(s)$}\\\cline{2-7}
			& $t_1$ & $t_2$ & $s_1$ & $s_2$& $s_3$ & $s_4$\\\cline{1-7}
			$p_1$ & $1$ & $0.7$ & $1$ & $1$ & $0$ & $0.2$ \\\cline{1-7}
			$p_2$ & $1$ & $0.7$ & $1$ & $0$ & $1$ & $0.2$ \\\cline{1-7}   
		\end{tabular}
		\label{tab:toy_example_3}}
	\vspace{-3mm}
\end{table*}
\section{Appendix}	
%\subsubsection*{\bf Proof of \cref{lemma:swm_max_matching}:}
{\bf Proof of Lemma-\ref{lemma:swm_max_matching}:}
Let's use matrix $X$ of dimensions $\abs{\mathcal{T}}\times \abs{\mathcal{S}}$ to represent a conference schedule; 
i.e., the element $X_{t,s}\in\{0,1\}$ is a binary indicator variable for talk $t\in \mathcal{T}$ being scheduled in slot $s\in \mathcal{S}$.
We can now rewrite the efficiency maximization problem ({\small $\argmax_\Gamma \sum\limits_{t\in\mathcal{T}}\sum\limits_{p\in\mathcal{P}} V_p(t)\times A_p(\Gamma(t))$}) as below.
{\small\[\argmax_X\sum\limits_{t\in\mathcal{T}}\sum\limits_{p\in\mathcal{P}} \sum\limits_{s\in\mathcal{S}}V_p(t)\cdot A_p(s)\cdot X_{t,s}\]
\[\equiv \argmax_X \sum\limits_{t\in\mathcal{T}}\sum\limits_{s\in\mathcal{S}}\sum\limits_{p\in\mathcal{P}}V_p(t)\cdot A_p(s)\cdot X_{t,s}\]
\[\equiv \argmax_X \sum\limits_{t\in\mathcal{T}}\sum\limits_{s\in\mathcal{S}}X_{t,s}\cdot \sum\limits_{p\in\mathcal{P}}V_p(t)\cdot A_p(s)\]
\[\equiv \argmin_X \sum\limits_{t\in\mathcal{T}}\sum\limits_{s\in\mathcal{S}}X_{t,s}\cdot \Big(\abs{\mathcal{P}}-\sum\limits_{p\in\mathcal{P}}V_p(t)\cdot A_p(s)\Big)\]
\[\equiv \argmin_X \sum\limits_{t\in\mathcal{T}}\sum\limits_{s\in\mathcal{S}}X_{t,s}\cdot c_{t,s}\]
}
As $\abs{\mathcal{T}}\leq \abs{\mathcal{S}}$, we can introduce $\big(\abs{\mathcal{S}}-\abs{\mathcal{T}}\big)$ dummy talks $\Big(\mathcal{T}_d=\big\{t'_1,\cdots,t'_{\abs{\mathcal{S}}-\abs{\mathcal{T}}}\big\}\Big)$ with costs: $c_{t,s}=\abs{\mathcal{P}}$, $\forall t,s\in \mathcal{T}_d,\mathcal{S}$.
As scheduling the dummy talks has a constant cost attached, we can now rewrite the above transformed problem as below.
{\small\[\argmin_X \sum\limits_{t\in\mathcal{T}\cup\mathcal{T}_d}\sum\limits_{s\in\mathcal{S}}X_{t,s}\cdot c_{t,s}\]}
This is a minimum cost bipartite matching problem which can be solved in polynomial time using the Hungarian algorithm \cite{kuhn1955hungarian}.
\if 0
% Proof of removed proposition
\subsubsection*{\bf Proof of Proposition-\ref{prop:participant_unfairness_redn}:}
For a schedule $\Gamma$, let's assume $\Psi^\text{P}(\Gamma)=\epsilon'$.
Let $\epsilon\in [0,\epsilon')$, then $\Gamma$ is not $\epsilon$-fair for participants (from definition-\ref{def:participant_unfairness}).
%and is $\epsilon$-fair for participants if $\epsilon\geq\epsilon'$ (from \cref{lemma:participant_fairness}).
%
This implies that, there is a pair of participants $p_i,p_j\in\mathcal{P}$ such that $\epsilon<\abs{NCG(p_i|\Gamma)-NCG(p_j|\Gamma)}\leq\epsilon'$ (from definition-\ref{def:participant_fairness}).
Let's increase $\epsilon$ till $\epsilon=\big|NCG(p_i|\Gamma)-NCG(p_j|\Gamma)\big|$.
Thus, now the satisfaction disparity between $p_i$ and $p_j$ is at most $\epsilon$, however $\epsilon$ may still be less than $\epsilon'$.
If we keep on increasing $\epsilon$ like this, there will not be any such opportunity to increase $\epsilon$ further when we reach $\epsilon=\max_{p_i,p_j\in\mathcal{P}} \abs{NCG(p_i|\Gamma)-NCG(p_j|\Gamma)}$.
Now with the current value of $\epsilon$, $\Gamma$ can be called $\epsilon$-fair for the first time.
Therefore, $\Psi^\text{P}(\Gamma)=\epsilon'=\max_{p_i,p_j\in\mathcal{P}} \abs{NCG(p_i|\Gamma)-NCG(p_j|\Gamma)}$, which is the maximum pairwise disparity in participant satisfaction.

As $\forall p\in \mathcal{P}$,
\[\big\{\max_{p_i\in\mathcal{P}} NCG(p_i|\Gamma)\big\}\geq NCG(p|\Gamma)\geq \big\{\min_{p_j\in\mathcal{P}} NCG(p_j|\Gamma)\big\}\geq0,\]
the maximum pairwise disparity will be in between the most satisfied participant(s) and the least satisfied participant(s).\\
$\therefore \Psi^\text{P}(\Gamma)=\big\{\max_{p_i\in\mathcal{P}} NCG(p_i|\Gamma)\big\}-\big\{\min_{p_j\in\mathcal{P}} NCG(p_j|\Gamma)\big\}$.
\fi 
%\subsubsection*{\bf Proofsketch of \cref{theorem:participant_fairness_np_hard}:}
~\\{\bf Proofsketch of Theorem-\ref{theorem:participant_fairness_np_hard}:}
The problem is clearly in NP.
Given a schedule $\Gamma$, participant unfairness can be calculated and verified in $\mathcal{O}\big(m.(n\log n+ l\log l)\big)$ or $\mathcal{O}\big(ml\log l\big)$ time (as $n\leq l$).
To prove NP-hardness, we reduce the {\bf number partitioning problem} (a well-known NP-complete problem \cite{garey1979computers}) to our participant fairness problem. 
An arbitrary instance of number partition problem has a multiset $\mathcal{G}$ of integers, and the task is to decide whether $\mathcal{G}$ can be partitioned into two disjoint subsets $\mathcal{G}_1$ and $\mathcal{G}_2$ such that the sum of numbers in $\mathcal{G}_1$ equals the sum of numbers in $\mathcal{G}_2$.
We provide a polynomial time reduction to a participant fairness instance.
Let $n=\abs{\mathcal{G}}$, $\mathcal{G}=\{g_1,g_2,\cdots,g_n\}$, and $sum(\mathcal{G})=\sum_{i=1}^{n}g_i$.
Now let the set of participants $\mathcal{P}=\{p_1,p_2\}$, and the set of talks $\mathcal{T}=\{t_1,\cdots,t_n\}$.
We set the interest scores of participants as $V_{p_1}(t_i)=V_{p_2}(t_i)=\frac{g_i}{sum(\mathcal{G})}$ $\forall 1\leq i\leq n$.
Let the set of slots be $\mathcal{S}=\{s_1,\cdots,s_n,s_{n+1},\cdots,s_{2n}\}$, and the availability scores of the participants be like: $A_{p_1}(s_i)$ is $1$ for $1\leq i\leq n$ and $0$ for $n+1\leq i\leq 2n$; $A_{p_2}(s_i)$ is $0$ for $1\leq i\leq n$ and $1$ for $n+1\leq i\leq 2n$.
Intuitively, scheduling a talk in the first half of the slots will bring no gain to $p_2$ while scheduling in second half of the slots will bring no gain to $p_1$.
In this setting, it can be easily found that $ICG(p_1)=ICG(p_2)=1$.
Given a polynomial time solution to participant fairness problem (definition-\ref{def:participant_fairness_decision}), we can set $\epsilon=0$, and check if such a conference schedule exists.
Essentially, scheduling talks with $\epsilon=0$ is identical to allocating the set of $n$ talks into (i) half of the slots when $p_1$ is available, and (ii) the other half when $p_2$ is available while ensuring the cumulative gains of $p_1$ and $p_2$ are same (as $ICG$s are already same).
The answer on the existence/non-existence of such a schedule will also answer the existence/non-existence of number partitions with equal sums in polynomial time (as it is a polynomial-time reduction).
Unless P=NP, such a polynomial time solution for the participant fairness problem does not exist.
Thus, the problem is NP-complete.
\if 0
% Proof of removed proposition
\subsubsection*{\bf Proof of Proposition-\ref{prop:speaker_unfairness_redn}:}
We skip the proof for proposition-\ref{prop:speaker_unfairness_redn}, as it will follow arguments and approach similar to those proposition-\ref{prop:participant_unfairness_redn}.
\fi 
%
%\noindent {\bf Proofsketch of \cref{theorem:speaker_fairness_np_hard}:}
%\gourab{need to write}
%The problem is clearly in NP. 
%Given a schedule $\Gamma$, 
%\subsubsection*{\bf Proof of \cref{claim:1}:}
~\\{\bf Proof of Claim-\ref{claim:1}:}
	%Let's assume the opposite, i.e., we can always gain participant fairness with no loss to efficiency.
	%
	We disprove the negation of the claim using a counter example as given in tab-\ref{tab:toy_example_1}.
	Both participants have interest score of $1$ for the talk.
	Now looking at the availability scores, while $p_1$ and $p_2$ have full ease of availability in $s_1$ and $s_3$ respectively, they both can make themselves available $s_2$ with $0.49$ probability.
	If we consider a efficiency objective (participation maximization) here, we would end up scheduling the talk in either in $s_1$ or in $s_3$;
	if $\Gamma(t)=s_1$ or $\Gamma(t)=s_3$, then $TEP(\Gamma)=1$;
	if $\Gamma(t)=s_2$, then $TEP(\Gamma)=0.98$ which is less.
	However, maximizing participation will either end up with [$NCG(p_1)=1,NCG(p_2)=0$] if $\Gamma(t)=s_1$ or [$NCG(p_1)=0,NCG(p_2)=1$] if $\Gamma(t)=s_3$;
	As both of these results from efficiency optimization provide disparate satisfaction to the participants, they both are unfair.
	On the other hand, if we schedule the talk in $s_2$ ($\Gamma(t)=s_2$), then it becomes fair to both the participants as they will get similar satisfaction [$NCG(p_1)=0.49,NCG(p_2)=0.49$]---they both get a chance to make themselves available in $s_2$ to attend the talk.
	Even though scheduling the talk in $s_2$, ensures participant fairness, it has come at a loss in efficiency; i.e., $TEP(\Gamma)$ reduced from $1$ to $0.98$.
%\subsubsection*{\bf Proof of \cref{claim:2}:}
~\\{\bf Proof of Claim-\ref{claim:2}:}
%Let's assume the opposite argument to be true; i.e., we can always gain speaker fairness with no loss to efficiency.
%
We disprove the negation of the claim using a counter example as given in tab-\ref{tab:toy_example_2}.
Now, to maximize efficiency, we can just match talks in decreasing order of overall interest scores to slots in decreasing order of availability scores; i.e., $\Gamma^\text{EM}(t_1)=s_1$ and $\Gamma^\text{EM}(t_2)=s_3$, which will yield $TEP(\Gamma^\text{EM})=1.4$.
The speaker satisfactions for the talks with this schedule will be:
$NEC(t_1|\Gamma^\text{EM})=1$ (as $EC(t_1|\Gamma^\text{EM})=1$ and $IEC(t_1)=1$), and $NEC(t_2|\Gamma^\text{EM})=0.8$ (as $EC(t_1|\Gamma^\text{EM})=0.4$ and $IEC(t_1)=0.5$).
Such disparity in speaker satisfactions can be attributed to speaker unfairness.
In order to reduce speaker-side disparity, we can use a different schedule:
$\Gamma(t_1)=s_3$ and $\Gamma(t_2)=s_2$;
this yields speaker satisfactions $NEC(t_1|\Gamma)=0.8$ and $NEC(t_2|\Gamma)=0.75$ (as $EC(t_1|\Gamma)=0.8$ and $EC(t_2|\Gamma)=0.375$).
This above schedule has in fact the lowest possible disparity in speaker satisfactions, i.e., the highest possible speaker fairness.
Even though this schedule $\{(t_1,s_3),(t_2,s_2)\}$ is fairer to the speakers than the earlier $\{(t_1,s_1),(t_2,s_3)\}$, the gain in speaker fairness has come at a loss in efficiency; $TEP(\Gamma)$ reduced from $1.4$ to $1.175$.
%
%\subsubsection*{\bf Proof of \cref{claim:3}:}
~\\{\bf Proof of Claim-\ref{claim:3}:}
%Let's assume the opposite argument to be true;
%we can always get participant fairness and speaker fairness simultaneously.
%
We disprove the negation of the claim using a counter example as given in tab-\ref{tab:toy_example_3}.
In this example, the schedule $\Gamma=\{(t_1,s_2),(t_2,s_3)\}$ achieves speaker fairness---$NEC(t_1|\Gamma)=NEC(t_2|\Gamma)=0.5$ (as $EC(t_1|\Gamma)=1$, $EC(t_2|\Gamma)=0.7$, while $IEC(t_1)=2$, $IEC(t_2)=1.4$).
However, $\Gamma$ is unfair for the participants---$NCG(p_1|\Gamma)=\frac{1}{1.7}<\frac{0.7}{1.7}=NCG(p_2|\Gamma)$ (as $CG(p_1|\Gamma)=1$, $CG(p_2|\Gamma)=0.7$, while $ICG(p_1)=ICG(p_2)=1.7$).
On the other hand, schedule $\Gamma'=\{(t_1,s_1),(t_2,s_4)\}$ is fair for the participants---$NCG(p_1|\Gamma')=NCG(p_2|\Gamma')=\frac{1.14}{1.7}$, while being unfair for the speakers as $NEC(t_1|\Gamma')=1>0.2=NEC(t_2|\Gamma')$.
%
%\subsubsection*{\bf Proof of \cref{lemma:V_or_A_same}:}
~\\{\bf Proof of Lemma-\ref{lemma:V_or_A_same}:}
There are three cases where we need to prove that IAM maximizes efficiency;
{\bf (a)} if all participants have identical ease of availability over all available slots $A_p(s)=A(s)$, $\forall s\in \mathcal{S},p\in \mathcal{P}$ (this case is similar to physical conference settings where all participants gather at the same place, thus, have identical ease of availability);
{\bf (b)} if all participants have identical interests over all talks $V_p(t)=V(t)$, $\forall t\in \mathcal{T},p\in \mathcal{P}$;
{\bf (c)} if both (a) and (b) are true.
We, first, reduce the EM objectives in the following cases, and observe that they take a particular form where IAM gives solution.\\
{\bf Case-(a):}
Given $A_p(s)=A(s)$, $\forall s\in \mathcal{S},p\in \mathcal{P}$.
From eq-\ref{eq:TEP}:
\[\text{EM}\equiv \argmax_\Gamma\sum_{p\in\mathcal{P}}\sum_{t\in\mathcal{T}}V_p(t)\times A_p(\Gamma(t))\]
\[\equiv \argmax_\Gamma\sum_{p\in\mathcal{P}}\sum_{t\in\mathcal{T}}V_p(t)\times A(\Gamma(t)) \equiv \argmax_\Gamma\sum_{t\in\mathcal{T}}\sum_{p\in\mathcal{P}}V_p(t)\times A(\Gamma(t))\]
\[\equiv \argmax_\Gamma\sum_{t\in\mathcal{T}} A(\Gamma(t))\Big[\sum_{p\in\mathcal{P}}V_p(t)\Big] \equiv \argmax_\Gamma\sum_{t\in\mathcal{T}} A(\Gamma(t))\times \mathcal{V}(t)\]
(where overall interest score $=\mathcal{V}(t)=\sum_{p\in\mathcal{P}}V_p(t)$)\\
{\bf Case-(b):}
Given $V_p(t)=V(t)$, $\forall t\in \mathcal{T},p\in \mathcal{P}$.
From eq-\ref{eq:TEP}:
\[\text{EM}\equiv \argmax_\Gamma\sum_{p\in\mathcal{P}}\sum_{t\in\mathcal{T}}V_p(t)\times A_p(\Gamma(t))\]
\[\equiv \argmax_\Gamma\sum_{p\in\mathcal{P}}\sum_{t\in\mathcal{T}}V(t)\times A_p(\Gamma(t)) \equiv \argmax_\Gamma\sum_{t\in\mathcal{T}}\sum_{p\in\mathcal{P}}V(t)\times A_p(\Gamma(t))\]
\[\equiv \argmax_\Gamma\sum_{t\in\mathcal{T}} V(t)\Big[\sum_{p\in\mathcal{P}}A_p(\Gamma(t))\Big] \equiv \argmax_\Gamma\sum_{t\in\mathcal{T}} V(t)\times \mathcal{A}(\Gamma(t))\]
(where overall availability score $=\mathcal{A}(s)=\sum_{p\in\mathcal{P}}A_p(s)$)\\
{\bf Case-(c):}
Given $A_p(s)=A(s),V_p(t)=V(t)$, $\forall s\in \mathcal{S},\forall t\in \mathcal{T},p\in \mathcal{P}$.
From eq-\ref{eq:TEP}:
\[\text{EM}\equiv \argmax_\Gamma\sum_{p\in\mathcal{P}}\sum_{t\in\mathcal{T}}V_p(t)\times A_p(\Gamma(t))\]
\[\equiv \argmax_\Gamma\sum_{p\in\mathcal{P}}\sum_{t\in\mathcal{T}}V(t)\times A(\Gamma(t)) \equiv \argmax_\Gamma\sum_{t\in\mathcal{T}}\sum_{p\in\mathcal{P}}V(t)\times A(\Gamma(t))\]
\[\equiv \argmax_\Gamma\sum_{t\in\mathcal{T}}V(t)\times A(\Gamma(t))\]
\noindent In all the three cases above, EM reduces to a form where the terms are independent of individual participant interests and availability while depending only on the overall interest levels ($\mathcal{V}$ in case (a), and participant-independent $V$ in cases (b) \& (c)) and overall availability ($\mathcal{A}$ in case (b), and participant-independent $A$ in cases (a) \& (c)).
Thus, to maximize the reduced objectives in all the cases, top values of overall availability $A$ or $\mathcal{A}$ need to be matched with top values of overall interests $\mathcal{V}$ or $V$ (this also follows from the Rearrangement inequality \cite{hardy1967inequalities})---making it identical to IAM.
{\small
	\begin{algorithm}
		{\raggedright
			{
				{\raggedright{\bf Input :} Set of participants $\mathcal{P}$, set of talks $\mathcal{T}$,   set of slots $\mathcal{S}$ (such that $\abs{\mathcal{T}}\leq\abs{\mathcal{S}}$), the interest scores $V(p,t)$ $\forall p\in \mathcal{P}, t\in \mathcal{T}$, the availability scores $A(p,s)$ $\forall p\in \mathcal{P}, s\in \mathcal{S}$, weight for participant fairness objective $\lambda_1$, and weight for speaker objective $\lambda_2$.}\\
				{\raggedright{\bf Output:} A fair conference schedule $\Gamma$.}
			}
			\caption{{\it RRFS }($\mathcal{P}, \mathcal{T}, \mathcal{S}, V, A, \lambda_1, \lambda_2$)}
			\label{alg:RRFS}
			\begin{algorithmic}[1]
				\STATE $NSch\leftarrow \mathcal{T}$ \COMMENT{Set of all talks which are not yet scheduled}
				\STATE $NAll\leftarrow \mathcal{S}$ \COMMENT{Set of all slots which are not yet allotted}
				\STATE Let $RJO(\mathcal{P}, \mathcal{T}, \mathcal{S}, V, A, \lambda_1, \lambda_2)$ represent the relaxed joint optimization problem (i.e., $X_{t,s}\in \{0,1\}$ in eq-\ref{eq:ip} is replaced with $X_{t,s}\geq 0$).
				
				\WHILE{$NSch\neq \Phi$}
				\STATE{$X\leftarrow RJO(\mathcal{P}, \mathcal{T}, \mathcal{S}, V, A, \lambda_1, \lambda_2)$ \COMMENT{A fractional solution} \label{algstep:RJO}}
				\WHILE{$X$ is not a zero-matrix \COMMENT{Rounding loop}}
				\STATE{$t,s=\argmax_{t',s'\in \mathcal{T},\mathcal{S}}X_{t',s'}$ \COMMENT{Maximum element}}
				\STATE{Set $\Gamma(t)=s$}
				\STATE{$NSch\leftarrow NSch\setminus \{t\}$}
				\STATE{$NAll\leftarrow NAll\setminus \{s\}$}
				\STATE{Set $X_{i,s}=0$ $\forall i\in \mathcal{T}$, and $X_{t,j}=0$ $\forall j\in \mathcal{S}$}
				\ENDWHILE
				\STATE{$\mathcal{T}\leftarrow NSch$, $\mathcal{S}\leftarrow NAll$}
				\ENDWHILE
				
				\STATE Return $\Gamma$.
		\end{algorithmic}}
	\end{algorithm}
}

\if 
{\small
	\begin{algorithm}
		{\raggedright
			{
				{\raggedright{\bf Input :} Set of participants $\mathcal{P}$, set of talks $\mathcal{T}$,   set of slots $\mathcal{S}$ (such that $\abs{\mathcal{T}}\leq\abs{\mathcal{S}}$), the interest scores $V(p,t)$ $\forall p\in \mathcal{P}, t\in \mathcal{T}$, the availability scores $A(p,s)$ $\forall p\in \mathcal{P}, s\in \mathcal{S}$, weight for participant fairness objective $\lambda_1$, and weight for speaker objective $\lambda_2$.}\\
				{\raggedright{\bf Output:} A fair conference schedule $\Gamma$.}
			}
			\caption{{\it RRFS }($\mathcal{P}, \mathcal{T}, \mathcal{S}, V, A, \lambda_1, \lambda_2$)}
			\label{alg:RRFS}
			\begin{algorithmic}[1]
				\State $NSch\leftarrow \mathcal{T}$ \Comment{Set of all talks which are not yet scheduled}
				\State $NAll\leftarrow \mathcal{S}$ \Comment{Set of all slots which are not yet allotted}
				\State Let $RJO(\mathcal{P}, \mathcal{T}, \mathcal{S}, V, A, \lambda_1, \lambda_2)$ represent the relaxed joint optimization problem (i.e., $X_{t,s}\in \{0,1\}$ in eq-\ref{eq:ip} is replaced with $X_{t,s}\geq 0$).
				
				\While {$NSch\neq \Phi$}
				\State $X\leftarrow RJO(\mathcal{P}, \mathcal{T}, \mathcal{S}, V, A, \lambda_1, \lambda_2)$ \Comment{A fractional solution} \label{algstep:RJO}
				\While {$X$ is not a zero-matrix} \Comment{Rounding loop}
				\State $t,s=\argmax_{t',s'\in \mathcal{T},\mathcal{S}}X_{t',s'}$ \Comment{Maximum element}
				\State Set $\Gamma(t)=s$
				\State $NSch\leftarrow NSch\setminus \{t\}$
				\State $NAll\leftarrow NAll\setminus \{s\}$
				\State Set $X_{i,s}=0$ $\forall i\in \mathcal{T}$, and $X_{t,j}=0$ $\forall j\in \mathcal{S}$
				\EndWhile
				\State $\mathcal{T}\leftarrow NSch$, $\mathcal{S}\leftarrow NAll$
				\EndWhile
				
				\State Return $\Gamma$.
		\end{algorithmic}}
	\end{algorithm}
}
\fi 
\begin{table}\footnotesize
	\begin{tabular}{|c|c|c|c|c|c|c|c|c|}
		\hline
		%&\multicolumn{2}{c|}{Participant-side} &\multicolumn{1}{c|}{Speaker-side}\\\cline{2-5}
		\multirow{2}*{\bf mFairConf} & \multirow{2}*{$NCG_\text{gini}$} & \multirow{2}*{$NCG_\text{mean}$}& \multicolumn{3}{c|}{$NEC_\text{gini}$} & \multicolumn{3}{c|}{$NEC_\text{mean}$}\\\cline{4-9}
		& & & $\mathcal{T}_1$ & $\mathcal{T}_2$ & $\mathcal{T}_3$& $\mathcal{T}_1$ & $\mathcal{T}_2$ & $\mathcal{T}_3$\\\cline{1-9}
		$1$ Full&$0.19$&$0.43$&$0.08$&$0.02$&$0.06$&$0.82$&$0.91$&$0.87$\\\cline{1-9}
		$\mathcal{T}_{123}$ &$0.19$&$0.44$&$0.03$&$0.01$&$0.05$&$0.91$&$0.87$&$0.81$\\\cline{1-9}
		$\mathcal{T}_{1231}$ &$0.17$ &$0.51$ &$0.13$ &$0.01$ &$0.05$ &$1.24$ &$0.87$ &$0.81$\\\cline{1-9}
		$\mathcal{T}_{12312}$ &$0.13$ &$0.56$ &$0.13$&$0.06$ &$0.05$ &$1.24$ &$1.17$ &$0.81$\\\cline{1-9}
		$\mathcal{T}_{123123}$ &$0.10$ &$0.58$ &$0.13$ &$0.06$ &$0.07$ &$1.24$ &$1.17$ &$0.96$\\\cline{1-9}
		$2$ Full & $0.11$& $0.54$& $0.06$& $0.08$& $0.10$&$0.95$ &$1.16$ &$1.18$\\\cline{1-9}
		$\mathcal{T}_{123121}$ &$0.10$ &$0.59$ &$0.13$ &$0.06$ &$0.05$ &$1.37$ &$1.17$ &$0.81$\\\cline{1-9}
		%\multirow{2}*{\bf Participants} & \multicolumn{1}{c|}{\bf $V_p(t)$} & \multicolumn{3}{c|}{\bf $A_p(s)$}\\\cline{1-5}
		%& $t$ & $s_1$ & $s_2$& $s_3$\\\cline{1-5}
		%$p_1$ & $1$ & $1$& $0.49$ &$0$ \\\cline{1-5}
		%$p_2$  & $1$ &$0$ & $0.49$&$1$\\\cline{1-5}   
	\end{tabular}
	\caption{\small RECSYS priority scheduling results (rounded upto two decimal places). $\mathcal{T}_{1231}$ refers to mFairConf scheduling of top, medium, low, and finally a repetition for top priority talks as discussed in sec-\ref{subsubsec:priority_experiments}. ``$1$ Full" refers to scheduling all talks at once without any grouping. ``$2$ Full" refers to $2$ rounds of full scheduling where the first one is same as ``$1$ Full" and the second one is trying to schedule all talks in the remaining slots as repetitions.}
	\label{tab:priority_scheduling}
	\vspace{-4mm}
\end{table}
~\\{\bf Future Work:}
(i) We limited ourselves in modeling single-track conference scenario;
however, larger conferences often have parallel sessions to accommodate a higher number of talks.
%
%Accounting for such overlapping slots would require changes in the formulation of participant and speaker satisfactions as participants would have to choose which of the parallel sessions to attend, thereby also changing the expected audience in a talk.
The proposed modular framework can be reused with tweaked participant and speaker satisfaction metrics as participants would choose the best talk out of all parallel sessions. The interest score terms (in eqs-\ref{eq:CG},\ref{eq:EC}) will be replaced with maximum over the set of parallel talks in each slot.
(ii) We modeled speaker satisfaction using the expected crowd at her talk, however, the speaker's 
convenience in the assigned time slot could also play a role and can be modeled accordingly;
(iii) Although we considered the participants and speakers to be separate agents in our model, a single agent could be both a participant and a speaker;
thus for such agents with dual roles, the participant satisfaction measure needs to be suitably modified.
(iv) We have limited ourselves to individual fairness for the speakers and participants, without considering their sensitive attributes. However, our joint optimization framework can accommodate additional constraints to include group fairness objectives. 
%For example, equity in satisfaction of male and female speakers/participants can be ensured by designing an objective similar to eq-7(9) where one can minimize the difference between aggregate satisfaction of two groups of speakers/participants.  We’ll add this discussion in the conclusion section. 
%Finally, as time passes, we are learning more and more about the various challenges and oppportunities brought by virtual conferences; we believe that this paper is a first small step towards appreciating them and addressing the core issues.

\if 0
\subsubsection*{\bf Time and Space Complexity Analysis:}
In the worst case, the relaxed joint optimization at \cref{algstep:RJO} of \cref{alg:RRFS} runs for every talk, i.e., $n$ times.
Now each fractional solution is found in $\mathcal{O}(l^3)$ \cite{kuhn1955hungarian}.
Apart from that every time RJO (\cref{algstep:RJO}) is run, the $IEC$s of all the remaining talks need to be recalculated which takes $\mathcal{O}(mnl)$.
On the other hand, $ICG$s of all the participants ($\mathcal{O}(ml\log l)$) need to be calculated only once during the first run of RJO.
Thus, the overall time complexity of RRFS is $\mathcal{O}\big(n(l^3+mnl)+ml\log l\big)$. i.e., $\mathcal{O}\big(nl(l^2+mn)\big)$ when $l\leq2^{(n^2)}$.
When the participants are clustered into $k$ clusters and then $m$ is just replaced by $k$ ($\mathcal{O}\big(nl(l^2+kn)\big)$) as there will be $k$ representative participant profiles.
Coming to space complexity, it is hard to exactly estimate the space complexity because of the proprietary nature of techniques used by Gurobi (\url{https://www.gurobi.com/}).
However, we can estimate a lower bound on the scale of space reduction by participant clustering when compared to simple RRFS.
In an instance of VCS, there are exactly $mn$ interest scores and $ml$ availability scores.
For the optimizer to find the fractional solution in $\mathcal{O}(l^3)$ time, the minimum number of variables ($V(\cdot)\times A(\cdot)$ terms in the objective) needed to be saved would be $(mn)\times(ml)$.
The space complexity would be at least quadratic in terms of number of participants $m$.
%Thus, keeping the other things constant, the fractional reduction in space requirement will also be 
Now this means that, if the participants' space is reduced by $10$ times through clustering, i.e., $k=\frac{m}{10}$, then the space or memory requirement would reduce at least by $100$ times. 
%%%%%%%%%%%%%%%%%%%%%%%%%%%%%%%%%%%%%%%%%%%%%%%%%%%%%%%%%%%%%%%%%%%%%%%%
%%%%%%%%%%%%%%%%%%%%%%%%%%%%%%%%%%%%%%%%%%%%%%%%%%%%%%%%%%%%%%%%%%%%%%%%
%%%%%%%%%%%%%%%%%%%%%%%%%%%%%%%%%%%%%%%%%%%%%%%%%%%%%%%%%%%%%%%%%%%%%%%%
%
\begin{figure}
	\center{\subfloat[{\bf Interest (Power Law)}]{\pgfplotsset{width=0.2\textwidth,height=0.18\textwidth,compat=1.9}
			\input{figures/syn_V}\label{fig:syn_V}}
		\hfil
		\subfloat[{\bf Availability (Cosine)}]{\pgfplotsset{width=0.23\textwidth,height=0.18\textwidth,compat=1.9}
			\input{figures/syn_A}\label{fig:syn_A}}
		\vfil
		\subfloat{\pgfplotsset{width=.7\textwidth,compat=1.9}
			\begin{tikzpicture}
			\begin{customlegend}[legend entries={{\bf $V^1$},{\bf $V^2$},{\bf $A^1$},{\bf $A^2$}, {\bf mFairConf}},legend columns=4,legend style={/tikz/every even column/.append style={column sep=0.2cm}}]
			\addlegendimage{red,mark=o,thick,sharp plot}
			\addlegendimage{teal,mark=o,thick,sharp plot}
			\addlegendimage{orange,mark=o,thick,sharp plot}
			\addlegendimage{black,mark=o,thick,sharp plot}
			\end{customlegend}
			\end{tikzpicture}}
	}\caption{Interest and Availability Patterns}\label{fig:syn_data}
\end{figure}
%
%===================================================================
%%%%%%%%%%%%%%%%%%%%%%%%%SYNTHETIC-segA=============================
%===================================================================
\begin{figure*}
	\center{\small
		\subfloat[{\bf Metric: $NCG_\text{mean}$}]{\pgfplotsset{width=0.2\textwidth,height=0.19\textwidth,compat=1.9}
			\input{figures/syn_segA_part_mu_lam1}\label{fig:syn_segA_part_mu_lam1}}
		\hfil
		\subfloat[{\bf Metric: $NCG_\text{max}-NCG_\text{min}$}]{\pgfplotsset{width=0.19\textwidth,height=0.19\textwidth,compat=1.9}
			\input{figures/syn_segA_part_unfair_lam1}\label{fig:syn_segA_part_unfair_lam1}}
		\hfil
		\subfloat[{\bf Metric: $NEC_\text{mean}$}]{\pgfplotsset{width=0.2\textwidth,height=0.19\textwidth,compat=1.9}
			\input{figures/syn_segA_spea_mu_lam1}\label{fig:syn_segA_spea_mu_lam1}}
		\hfil
		\subfloat[{\bf Metric: $NEC_\text{max}-NEC_\text{min}$}]{\pgfplotsset{width=0.2\textwidth,height=0.19\textwidth,compat=1.9}
			\input{figures/syn_segA_spea_unfair_lam1}\label{fig:syn_segA_spea_unfair_lam1}}
		\hfil
		\subfloat[{\bf Metric: $TEP$}]{\pgfplotsset{width=0.2\textwidth,height=0.19\textwidth,compat=1.9}
			\input{figures/syn_segA_TEP_lam1}\label{fig:syn_segA_TEP_lam1}}
		\hfil
		\subfloat[{\bf Metric: $NCG_\text{mean}$}]{\pgfplotsset{width=0.2\textwidth,height=0.19\textwidth,compat=1.9}
			\input{figures/syn_segA_part_mu_lam2}\label{fig:syn_segA_part_mu_lam2}}
		\hfil
		\subfloat[{\bf Metric: $NCG_\text{max}-NCG_\text{min}$}]{\pgfplotsset{width=0.19\textwidth,height=0.19\textwidth,compat=1.9}
			\input{figures/syn_segA_part_unfair_lam2}\label{fig:syn_segA_part_unfair_lam2}}
		\hfil
		\subfloat[{\bf Metric: $NEC_\text{mean}$}]{\pgfplotsset{width=0.2\textwidth,height=0.19\textwidth,compat=1.9}
			\input{figures/syn_segA_spea_mu_lam2}\label{fig:syn_segA_spea_mu_lam2}}
		\hfil
		\subfloat[{\bf Metric: $NEC_\text{max}-NEC_\text{min}$}]{\pgfplotsset{width=0.2\textwidth,height=0.19\textwidth,compat=1.9}
			\input{figures/syn_segA_spea_unfair_lam2}\label{fig:syn_segA_spea_unfair_lam2}}
		\hfil
		\subfloat[{\bf Metric: $TEP$}]{\pgfplotsset{width=0.2\textwidth,height=0.19\textwidth,compat=1.9}
			\input{figures/syn_segA_TEP_lam2}\label{fig:syn_segA_TEP_lam2}}
		\vfil
		\subfloat{\pgfplotsset{width=.7\textwidth,compat=1.9}
			\begin{tikzpicture}
			\begin{customlegend}[legend entries={{\bf SWM},{\bf PFair},{\bf SFair},{\bf IAM}, {\bf mFairConf}},legend columns=5,legend style={/tikz/every even column/.append style={column sep=0.5cm}}]
			\addlegendimage{red,mark=.,ultra thick,sharp plot}
			\addlegendimage{teal,mark=.,ultra thick,sharp plot}
			\addlegendimage{orange,mark=.,ultra thick,sharp plot}
			\addlegendimage{dashed,black,mark=.,ultra thick,sharp plot}
			\addlegendimage{blue,mark=o,sharp plot}
			\end{customlegend}
			\end{tikzpicture}}
	}\caption{Results on data with balanced participant groups (indentical interests, segregated availability). For the plots in first row, $\lambda_2$ is fixed at $0.5$, and $\lambda_1$ is varied. For the plots in second row, $\lambda_1$ is fixed at $0.5$, and $\lambda_2$ is varied.}\label{fig:syn_segA}
\end{figure*}
%
%===================================================================
%%%%%%%%%%%%%%%%%%%%%%%%%SYN-segA_Imb===============================
%===================================================================
\begin{figure*}
	\center{\small
		\subfloat[{\bf Metric: $NCG_\text{mean}$}]{\pgfplotsset{width=0.2\textwidth,height=0.19\textwidth,compat=1.9}
			\input{figures/syn_segA_Imb_part_mu_lam1}\label{fig:syn_segA_Imb_part_mu_lam1}}
		\hfil
		\subfloat[{\bf Metric: $NCG_\text{max}-NCG_\text{min}$}]{\pgfplotsset{width=0.2\textwidth,height=0.19\textwidth,compat=1.9}
			\input{figures/syn_segA_Imb_part_unfair_lam1}\label{fig:syn_segA_Imb_part_unfair_lam1}}
		\hfil
		\subfloat[{\bf Metric: $NEC_\text{mean}$}]{\pgfplotsset{width=0.2\textwidth,height=0.19\textwidth,compat=1.9}
			\input{figures/syn_segA_Imb_spea_mu_lam1}\label{fig:syn_segA_Imb_spea_mu_lam1}}
		\hfil
		\subfloat[{\bf Metric: $NEC_\text{max}-NEC_\text{min}$}]{\pgfplotsset{width=0.2\textwidth,height=0.19\textwidth,compat=1.9}
			\input{figures/syn_segA_Imb_spea_unfair_lam1}\label{fig:syn_segA_Imb_spea_unfair_lam1}}
		\hfil
		\subfloat[{\bf Metric: $TEP$}]{\pgfplotsset{width=0.2\textwidth,height=0.19\textwidth,compat=1.9}
			\input{figures/syn_segA_Imb_TEP_lam1}\label{fig:syn_segA_Imb_TEP_lam1}}
		\hfil
		\subfloat[{\bf Metric: $NCG_\text{mean}$}]{\pgfplotsset{width=0.2\textwidth,height=0.19\textwidth,compat=1.9}
			\input{figures/syn_segA_Imb_part_mu_lam2}\label{fig:syn_segA_Imb_part_mu_lam2}}
		\hfil
		\subfloat[{\bf Metric: $NCG_\text{max}-NCG_\text{min}$}]{\pgfplotsset{width=0.2\textwidth,height=0.19\textwidth,compat=1.9}
			\input{figures/syn_segA_Imb_part_unfair_lam2}\label{fig:syn_segA_Imb_part_unfair_lam2}}
		\hfil
		\subfloat[{\bf Metric: $NEC_\text{mean}$}]{\pgfplotsset{width=0.2\textwidth,height=0.19\textwidth,compat=1.9}
			\input{figures/syn_segA_Imb_spea_mu_lam2}\label{fig:syn_segA_Imb_spea_mu_lam2}}
		\hfil
		\subfloat[{\bf Metric: $NEC_\text{max}-NEC_\text{min}$}]{\pgfplotsset{width=0.2\textwidth,height=0.19\textwidth,compat=1.9}
			\input{figures/syn_segA_Imb_spea_unfair_lam2}\label{fig:syn_segA_Imb_spea_unfair_lam2}}
		\hfil
		\subfloat[{\bf Metric: $TEP$}]{\pgfplotsset{width=0.2\textwidth,height=0.19\textwidth,compat=1.9}
			\input{figures/syn_segA_Imb_TEP_lam2}\label{fig:syn_segA_Imb_TEP_lam2}}
		\vfil
		\subfloat{\pgfplotsset{width=.7\textwidth,compat=1.9}
			\begin{tikzpicture}
			\begin{customlegend}[legend entries={{\bf SWM},{\bf PFair},{\bf SFair},{\bf IAM}, {\bf mFairConf}},legend columns=5,legend style={/tikz/every even column/.append style={column sep=0.5cm}}]
			\addlegendimage{red,mark=.,ultra thick,sharp plot}
			\addlegendimage{teal,mark=.,ultra thick,sharp plot}
			\addlegendimage{orange,mark=.,ultra thick,sharp plot}
			\addlegendimage{dashed,black,mark=.,ultra thick,sharp plot}
			\addlegendimage{blue,mark=o,sharp plot}
			\end{customlegend}
			\end{tikzpicture}}
	}\caption{Results on data with imbalanced participant groups (identical interests, segregated availability). For the plots in first row, $\lambda_2$ is fixed at $0.5$, and $\lambda_1$ is varied. For the plots in second row, $\lambda_1$ is fixed at $0.5$, and $\lambda_2$ is varied.}\label{fig:syn_segA_Imb}
\end{figure*}
%
%===============================================================
%%%%%%%%%%%%%%%%%%%%%%%%%SYN-segV===============================
%===============================================================
\begin{figure*}
	\center{\small
		\subfloat[{\bf Metric: $NCG_\text{mean}$}]{\pgfplotsset{width=0.2\textwidth,height=0.19\textwidth,compat=1.9}
			\input{figures/syn_segV_part_mu_lam1}\label{fig:syn_segV_part_mu_lam1}}
		\hfil
		\subfloat[{\bf Metric: $NCG_\text{max}-NCG_\text{min}$}]{\pgfplotsset{width=0.2\textwidth,height=0.19\textwidth,compat=1.9}
			\input{figures/syn_segV_part_unfair_lam1}\label{fig:syn_segV_part_unfair_lam1}}
		\hfil
		\subfloat[{\bf Metric: $NEC_\text{mean}$}]{\pgfplotsset{width=0.2\textwidth,height=0.19\textwidth,compat=1.9}
			\input{figures/syn_segV_spea_mu_lam1}\label{fig:syn_segV_spea_mu_lam1}}
		\hfil
		\subfloat[{\bf Metric: $NEC_\text{max}-NEC_\text{min}$}]{\pgfplotsset{width=0.2\textwidth,height=0.19\textwidth,compat=1.9}
			\input{figures/syn_segV_spea_unfair_lam1}\label{fig:syn_segV_spea_unfair_lam1}}
		\hfil
		\subfloat[{\bf Metric: $TEP$}]{\pgfplotsset{width=0.2\textwidth,height=0.19\textwidth,compat=1.9}
			\input{figures/syn_segV_TEP_lam1}\label{fig:syn_segV_TEP_lam1}}
		\hfil
		\subfloat[{\bf Metric: $NCG_\text{mean}$}]{\pgfplotsset{width=0.2\textwidth,height=0.19\textwidth,compat=1.9}
			\input{figures/syn_segV_part_mu_lam2}\label{fig:syn_segV_part_mu_lam2}}
		\hfil
		\subfloat[{\bf Metric: $NCG_\text{max}-NCG_\text{min}$}]{\pgfplotsset{width=0.2\textwidth,height=0.19\textwidth,compat=1.9}
			\input{figures/syn_segV_part_unfair_lam2}\label{fig:syn_segV_part_unfair_lam2}}
		\hfil
		\subfloat[{\bf Metric: $NEC_\text{mean}$}]{\pgfplotsset{width=0.2\textwidth,height=0.19\textwidth,compat=1.9}
			\input{figures/syn_segV_spea_mu_lam2}\label{fig:syn_segV_spea_mu_lam2}}
		\hfil
		\subfloat[{\bf Metric: $NEC_\text{max}-NEC_\text{min}$}]{\pgfplotsset{width=0.2\textwidth,height=0.19\textwidth,compat=1.9}
			\input{figures/syn_segV_spea_unfair_lam2}\label{fig:syn_segV_spea_unfair_lam2}}
		\hfil
		\subfloat[{\bf Metric: $TEP$}]{\pgfplotsset{width=0.2\textwidth,height=0.19\textwidth,compat=1.9}
			\input{figures/syn_segV_TEP_lam2}\label{fig:syn_segV_TEP_lam2}}
		\vfil
		\subfloat{\pgfplotsset{width=.7\textwidth,compat=1.9}
			\begin{tikzpicture}
			\begin{customlegend}[legend entries={{\bf SWM},{\bf PFair},{\bf SFair},{\bf IAM}, {\bf mFairConf}},legend columns=5,legend style={/tikz/every even column/.append style={column sep=0.5cm}}]
			\addlegendimage{red,mark=.,ultra thick,sharp plot}
			\addlegendimage{teal,mark=.,ultra thick,sharp plot}
			\addlegendimage{orange,mark=.,ultra thick,sharp plot}
			\addlegendimage{dashed,black,mark=.,ultra thick,sharp plot}
			\addlegendimage{blue,mark=o,sharp plot}
			\end{customlegend}
			\end{tikzpicture}}
	}\caption{Results on data with balanced participant groups (segregated interests, identical availability). For the plots in first row, $\lambda_2$ is fixed at $0.5$, and $\lambda_1$ is varied. For the plots in second row, $\lambda_1$ is fixed at $0.5$, and $\lambda_2$ is varied.}\label{fig:syn_segV}
\end{figure*}
%
%===============================================================
%%%%%%%%%%%%%%%%%%%%%%%%%SYN-segV_Imb===========================
%===============================================================
\begin{figure*}
	\center{\small
		\subfloat[{\bf Metric: $NCG_\text{mean}$}]{\pgfplotsset{width=0.2\textwidth,height=0.19\textwidth,compat=1.9}
			\input{figures/syn_segV_Imb_part_mu_lam1}\label{fig:syn_segV_Imb_part_mu_lam1}}
		\hfil
		\subfloat[{\bf Metric: $NCG_\text{max}-NCG_\text{min}$}]{\pgfplotsset{width=0.2\textwidth,height=0.19\textwidth,compat=1.9}
			\input{figures/syn_segV_Imb_part_unfair_lam1}\label{fig:syn_segV_Imb_part_unfair_lam1}}
		\hfil
		\subfloat[{\bf Metric: $NEC_\text{mean}$}]{\pgfplotsset{width=0.2\textwidth,height=0.19\textwidth,compat=1.9}
			\input{figures/syn_segV_Imb_spea_mu_lam1}\label{fig:syn_segV_Imb_spea_mu_lam1}}
		\hfil
		\subfloat[{\bf Metric: $NEC_\text{max}-NEC_\text{min}$}]{\pgfplotsset{width=0.2\textwidth,height=0.19\textwidth,compat=1.9}
			\input{figures/syn_segV_Imb_spea_unfair_lam1}\label{fig:syn_segV_Imb_spea_unfair_lam1}}
		\hfil
		\subfloat[{\bf Metric: $TEP$}]{\pgfplotsset{width=0.2\textwidth,height=0.19\textwidth,compat=1.9}
			\input{figures/syn_segV_Imb_TEP_lam1}\label{fig:syn_segV_Imb_TEP_lam1}}
		\hfil
		\subfloat[{\bf Metric: $NCG_\text{mean}$}]{\pgfplotsset{width=0.2\textwidth,height=0.19\textwidth,compat=1.9}
			\input{figures/syn_segV_Imb_part_mu_lam2}\label{fig:syn_segV_Imb_part_mu_lam2}}
		\hfil
		\subfloat[{\bf Metric: $NCG_\text{max}-NCG_\text{min}$}]{\pgfplotsset{width=0.2\textwidth,height=0.19\textwidth,compat=1.9}
			\input{figures/syn_segV_Imb_part_unfair_lam2}\label{fig:syn_segV_Imb_part_unfair_lam2}}
		\hfil
		\subfloat[{\bf Metric: $NEC_\text{mean}$}]{\pgfplotsset{width=0.2\textwidth,height=0.19\textwidth,compat=1.9}
			\input{figures/syn_segV_Imb_spea_mu_lam2}\label{fig:syn_segV_Imb_spea_mu_lam2}}
		\hfil
		\subfloat[{\bf Metric: $NEC_\text{max}-NEC_\text{min}$}]{\pgfplotsset{width=0.2\textwidth,height=0.19\textwidth,compat=1.9}
			\input{figures/syn_segV_Imb_spea_unfair_lam2}\label{fig:syn_segV_Imb_spea_unfair_lam2}}
		\hfil
		\subfloat[{\bf Metric: $TEP$}]{\pgfplotsset{width=0.2\textwidth,height=0.19\textwidth,compat=1.9}
			\input{figures/syn_segV_Imb_TEP_lam2}\label{fig:syn_segV_Imb_TEP_lam2}}
		\vfil
		\subfloat{\pgfplotsset{width=.7\textwidth,compat=1.9}
			\begin{tikzpicture}
			\begin{customlegend}[legend entries={{\bf SWM},{\bf PFair},{\bf SFair},{\bf IAM}, {\bf mFairConf}},legend columns=5,legend style={/tikz/every even column/.append style={column sep=0.5cm}}]
			\addlegendimage{red,mark=.,ultra thick,sharp plot}
			\addlegendimage{teal,mark=.,ultra thick,sharp plot}
			\addlegendimage{orange,mark=.,ultra thick,sharp plot}
			\addlegendimage{dashed,black,mark=.,ultra thick,sharp plot}
			\addlegendimage{blue,mark=o,sharp plot}
			\end{customlegend}
			\end{tikzpicture}}
	}\caption{Results on data with imbalanced participant groups (segregated interests, identical availability). For the plots in first row, $\lambda_2$ is fixed at $0.5$, and $\lambda_1$ is varied. For the plots in second row, $\lambda_1$ is fixed at $0.5$, and $\lambda_2$ is varied.}\label{fig:syn_segV_Imb}
\end{figure*}

\subsubsection{\bf Balanced Participant Groups (Indentical Interests, Segregated Availability):}\label{subsubsec:syn_segA}
Here, we take a dataset with $\abs{\mathcal{P}}=10$, $\abs{\mathcal{T}}=10$, $\abs{\mathcal{S}}=15$.
While all the participants have identical interest scores same as $V^1$ in \cref{fig:syn_V}, the first $5$ participants have availability scores as $A^1$ in \cref{fig:syn_A}, and the next $5$ have $A^2$ in \cref{fig:syn_A}.
We plot the results for this dataset in \cref{fig:syn_segA}.
~\\{\bf Baseline Results:}
Both SWM and IAM yield same social welfare (\cref{fig:syn_segA_TEP_lam1}), as it also follows from \cref{lemma:V_or_A_same} under the special condition of identical interest scores of the participants;
however, they could result in different optimal-welfare schedules;
here also we see different schedules given by SWM and IAM (difference in \cref{fig:syn_segA_part_unfair_lam1}).
As all participants have identical interests but segregated availability for two equal sized groups, there is huge scope for bringing in participant fairness by balancing the talks in favorable slots of both participant groups;
thus, PFair brings a huge improvement in participant fairness (\cref{fig:syn_segA_part_unfair_lam1}) in comparison to SWM.
However, there is no such scope for improvement in speaker fairness which is why SFair achieves only a marginal improvement (\cref{fig:syn_segA_spea_unfair_lam1}).
~\\{\bf mFairConf Results:}
Again mFairConf with $\lambda_1=\lambda_2=0.5$ performs very good in all the metrics here too.
While mFairConf shows improvement in participant fairness (\cref{fig:syn_segA_part_unfair_lam1}) with increase in $\lambda_1$, it does not improve speaker fairness (\cref{fig:syn_segA_spea_unfair_lam2}) with increase in $\lambda_2$ due to very less scope for improving speaker fairness in this case.
However, it is worth noting that, just because there is less scope for improving speaker fairness, we should not just remove it from joint optimization by setting $\lambda_2=0$;
Setting $\lambda_2=0$ could adversely impact speaker fairness (see $\lambda_2=0$ point in \cref{fig:syn_segA_spea_unfair_lam2}).
Setting $\lambda_2=0$ can give the joint optimization an opportunity to further improve participant fairness (see $\lambda_2=0$ point in \cref{fig:syn_segA_part_unfair_lam2}) at the cost of losing speaker fairness.
Thus, it is important to keep reasonable non-zero weight $\lambda_2$ for speaker fairness---even in absence of any scope for improvement---as it can work both as optimizer and defender/preserver of speaker fairness.
\subsubsection{\bf Imbalanced Participant Groups (Identical Interests, Segregated Availability):}\label{subsubsec:syn_segA_Imb}
We use dataset same as previous, but with just one change:
here the first $7$ participants have availability scores as $A^1$ in \cref{fig:syn_A}, and the next $3$ have $A^2$ in \cref{fig:syn_A}.
We plot the results for this dataset in \cref{fig:syn_segA_Imb}.
~\\{\bf Baseline Results:}
In comparision to the case in \cref{subsubsec:syn_segA}, here SWM achieves higher participant satisfaction (compare SWM in \cref{fig:syn_segA_part_mu_lam1} and \cref{fig:syn_segA_Imb_part_mu_lam1}) and higher participation unfairness (compare SWM in \cref{fig:syn_segA_part_unfair_lam1} and \cref{fig:syn_segA_Imb_part_unfair_lam1}) too.
This is because, SWM can just assign the high interest talks to favorable slots of the majority participant group in order to maximize social welfare.
Thus, there is huge scope for improvement in participant fairness as the participant groups have segregated availability too.
That's why PFair brings a huge improvement in participant fairness (\cref{fig:syn_segA_Imb_part_unfair_lam1}).
However, just like the case in \cref{subsubsec:syn_segA}, there is no such scope to improve speaker fairness, thus, we see almost no improvement with SFair (\cref{fig:syn_segA_Imb_spea_unfair_lam1}).
~\\{\bf mFairConf Results:}
Here also mFairConf causes significant improvements in participant fairness (\cref{fig:syn_segA_Imb_part_unfair_lam1}) with increase in $\lambda_1$.
However, as there is no scope to improve speaker fairness, we see no improvement in speaker fairness (\cref{fig:syn_segA_Imb_spea_unfair_lam2}) with increase in $\lambda_2$.
Here also we see, it is not wise to set $\lambda_2=0$ just because there is no scope to improve speaker fairness;
here also, setting $\lambda_2=0$ adversely impacts speaker fairness and satisfaction (see $\lambda_2=0$ point in \cref{fig:syn_segA_Imb_spea_mu_lam2,fig:syn_segA_Imb_spea_unfair_lam2}).
Looking at the mFairConf results with $\lambda_1=\lambda_2=0.5$, we can say that mFairConf has handled the case of imbalanced availability segregation very well, and has produced very good results across all metrics.
\subsubsection{\bf Balanced Participant Groups (Segregated Interests, Identical Availability):}\label{subsubsec:syn_segV}
Here, we take a dataset with $\abs{\mathcal{P}}=10$, $\abs{\mathcal{T}}=10$, $\abs{\mathcal{S}}=15$.
While all the participants have identical availability scores same as $A^1$ in \cref{fig:syn_A}, the first $5$ participants have interests scores as $V^1$ in \cref{fig:syn_V}, and the next $5$ have $V^2$ in \cref{fig:syn_V}.
We plot the results for this dataset in \cref{fig:syn_segV}.
%
%~\\{\bf Baseline Results:}
%

As the participant groups have segregated interest scores for the talks and identical availability scores over all slots, one would expect SWM to show large unfairness in the results.
However, the slopes of the chosen interest score pattern and availability score pattern also play a role in this.
Looking at the slopes of $V^1$, $V^2$ \cref{fig:syn_V} (power law slope), one can easily see that they decrease significantly faster than $A^1$ in \cref{fig:syn_A} (Cosine slope).
Thus, two talks with same overall interest scores, can be assigned two consecutive slots without causing too high disparity in individual participant satisfactions and individual speaker satisfactions, as the difference between the overall availability scores of two consecutive slots is not too large.
This is why here, we see SWM not only optimizes social welfare but also achieves high participant and speaker fairness (\cref{fig:syn_segV})---close to mFairConf.

%~\\{\bf mFairConf Results:}
It is also worth noting that even though SWM provides a solution with the best speaker fairness (\cref{fig:syn_segV_spea_unfair_lam1}), SFair gives a different solution with same speaker fairness but with significantly poorer performances in other metrics.
Similarly, PFair also gives a different solution which improves participant fairness by a very small amount (\cref{fig:syn_segV_part_unfair_lam1}), but causes significant losses in other metrics.
This happens because both PFair and SFair are agnostic to the other fairness objective and welfare objective.
Thus, they may or may not result in the same schedule as SWM even if it is optimal.
\subsubsection{\bf Imbalanced Participant Groups (Segregated Interests, Identical Availability):}\label{subsubsec:syn_segV_Imb}
We use dataset same as previous, but with just one change:
here the first $7$ participants have interest scores as $V^1$ in \cref{fig:syn_V}, and the next $3$ have $V^2$ in \cref{fig:syn_V}.
We plot the results for this dataset in \cref{fig:syn_segV_Imb}.
%
%~\\{\bf Baseline Performances:}
%

In contrast to the case in \cref{subsubsec:syn_segV}, here, there is an imbalance in the interest segregation.
Thus, it provides SWM an opportunity to be biased towards the majority participant group and improve social welfare.
Thus, we see a higher participant unfairness (\cref{fig:syn_segV_Imb_part_unfair_lam1}).
mFairConf significantly improves participant fairness while causing marginal loss in social welfare (\cref{fig:syn_segV_Imb_TEP_lam1}).
%~\\{\bf mFairConf Performances:}

\subsubsection{\bf Results Summary:}\label{subsubsec:results_summary}
While SWM maximizes social welfare, it often results in high participant unfairness and speaker unfairness.
On the other hand, naive approach IAM also optimizes social welfare in special conditions (\cref{lemma:V_or_A_same}), but due to random tie breaks both SWM and IAM may not differentiate between two optimal-welfare schedules in terms of fairness.
Moreover, in absence of any explicit fairness consideration, both SWM and IAM often perform poorly in term fairness.
While PFair achieves maximum participant fairness, it often becomes unfair to the speakers, and also causes loss in mean speaker satisfaction;
the opposite happens in case of SFair.
Our joint optimization approach mFairConf with similar weights for participant and speaker fairness, i.e., $\lambda_1=\lambda_2=0.5$ is found to be performing very good across all the metrics in all the tested cases (achieves good participant and speaker fairness with only marginal losses in social welfare, and overall participant satisfaction and overall speaker satisfaction).

\begin{figure*}[t!]\small
	\center{
		\subfloat[{\bf Metric: $NCG_\text{mean}$}]{\pgfplotsset{width=0.2\textwidth,height=0.19\textwidth,compat=1.9}
			\input{figures/icml_part_mu_lam1}\label{fig:icml_part_mu_lam1}}
		\hfil
		\subfloat[{\bf Metric: $NCG_\text{max}-NCG_\text{min}$}]{\pgfplotsset{width=0.2\textwidth,height=0.19\textwidth,compat=1.9}
			\input{figures/icml_part_unfair_lam1}\label{fig:icml_part_unfair_lam1}}
		\hfil
		\subfloat[{\bf Metric: $NCG_\text{gini}$}]{\pgfplotsset{width=0.2\textwidth,height=0.19\textwidth,compat=1.9}
			\input{figures/icml_part_gini_lam1}\label{fig:icml_part_gini_lam1}}
		\hfil
		\subfloat[{\bf Metric: $NEC_\text{mean}$}]{\pgfplotsset{width=0.2\textwidth,height=0.19\textwidth,compat=1.9}
			\input{figures/icml_spea_mu_lam1}\label{fig:icml_spea_mu_lam1}}
		\hfil
		\subfloat[{\bf Metric: $NEC_\text{max}-NEC_\text{min}$}]{\pgfplotsset{width=0.2\textwidth,height=0.19\textwidth,compat=1.9}
			\input{figures/icml_spea_unfair_lam1}\label{fig:icml_spea_unfair_lam1}}
		\hfil
		\subfloat[{\bf Metric: $NEC_\text{gini}$}]{\pgfplotsset{width=0.2\textwidth,height=0.19\textwidth,compat=1.9}
			\input{figures/icml_spea_gini_lam1}\label{fig:icml_spea_gini_lam1}}
		\hfil
		\subfloat[{\bf Metric: $TEP$}]{\pgfplotsset{width=0.2\textwidth,height=0.19\textwidth,compat=1.9}
			\input{figures/icml_TEP_lam1}\label{fig:icml_TEP_lam1}}
		\hfil
		\\
		\subfloat[{\bf Metric: $NCG_\text{mean}$}]{\pgfplotsset{width=0.2\textwidth,height=0.19\textwidth,compat=1.9}
			\input{figures/icml_part_mu_lam2}\label{fig:icml_part_mu_lam2}}
		\hfil
		\subfloat[{\bf Metric: $NCG_\text{max}-NCG_\text{min}$}]{\pgfplotsset{width=0.2\textwidth,height=0.19\textwidth,compat=1.9}
			\input{figures/icml_part_unfair_lam2}\label{fig:icml_part_unfair_lam2}}
		\hfil
		\subfloat[{\bf Metric: $NCG_\text{gini}$}]{\pgfplotsset{width=0.2\textwidth,height=0.19\textwidth,compat=1.9}
			\input{figures/icml_part_gini_lam2}\label{fig:icml_part_gini_lam2}}
		\hfil
		\subfloat[{\bf Metric: $NEC_\text{mean}$}]{\pgfplotsset{width=0.2\textwidth,height=0.19\textwidth,compat=1.9}
			\input{figures/icml_spea_mu_lam2}\label{fig:icml_spea_mu_lam2}}
		\hfil
		\subfloat[{\bf Metric: $NEC_\text{max}-NEC_\text{min}$}]{\pgfplotsset{width=0.2\textwidth,height=0.19\textwidth,compat=1.9}
			\input{figures/icml_spea_unfair_lam2}\label{fig:icml_spea_unfair_lam2}}
		\hfil
		\subfloat[{\bf Metric: $NEC_\text{gini}$}]{\pgfplotsset{width=0.2\textwidth,height=0.19\textwidth,compat=1.9}
			\input{figures/icml_spea_gini_lam2}\label{fig:icml_spea_gini_lam2}}
		\hfil		
		\subfloat[{\bf Metric: $TEP$}]{\pgfplotsset{width=0.2\textwidth,height=0.19\textwidth,compat=1.9}
			\input{figures/icml_TEP_lam2}\label{fig:icml_TEP_lam2}}
		\vfil
		\subfloat{\pgfplotsset{width=.7\textwidth,compat=1.9}
			\begin{tikzpicture}
				\begin{customlegend}[legend entries={{\bf SWM},{\bf PFair},{\bf SFair},{\bf IAM}, {\bf CF = 0.05}, {\bf CF = 0.10}, {\bf CF = 0.15}, {\bf CF = 0.20}},legend columns=8,legend style={/tikz/every even column/.append style={column sep=0.5cm}}]
					\addlegendimage{red,mark=.,ultra thick,sharp plot}
					\addlegendimage{teal,mark=.,ultra thick,sharp plot}
					\addlegendimage{orange,mark=.,ultra thick,sharp plot}
					\addlegendimage{dashed,black,mark=.,ultra thick,sharp plot}
					\addlegendimage{blue,mark=o,sharp plot}
					\addlegendimage{green,mark=o,sharp plot}
					\addlegendimage{violet,mark=square,sharp plot}
					\addlegendimage{magenta,mark=diamond,sharp plot}
				\end{customlegend}
		\end{tikzpicture}}
	}\caption{Results by mFairConf-LP-Rounded on ICML-17 dataset. For the plots in the first two rows, $\lambda_2$ is fixed at $0.5$, and $\lambda_1$ is varied. For the plots in last two rows, $\lambda_1$ is fixed at $0.5$, and $\lambda_2$ is varied.}\label{fig:icml}
\end{figure*}

\section{Parallel upto $k$ Talks}
As bigger conferences now-a-days run multiple parallel sessions of talks to accomodate larger number of talks in a limited amount of time, we also extend our formulation to find fair schedules in such settings.
Here, we consider the conference schedule $\Gamma$ as a many-to-one mapping from $\mathcal{T}$ to $\mathcal{S}$.
However, there are only limited number of parallel talks.
Here, we consider a setting in which there is an upper limit $k$ on the number of parallel talks in each slot.
Note that we have an inherent assumption here: $\abs{\mathcal{T}}\leq\abs{\mathcal{S}}\times k$.
Next, we define a special inverse schedule function to ease up further formulations of participant and stakeholder satisfactions.
\begin{definition}
	{\bf Special Inverse Schedule Function:}
	Although the many-to-one mapping $\Gamma$ is not invertible, we define a special inverse schedule function $\Gamma^{-1}:\mathcal{S}\rightarrow[\mathcal{T}]^k$ where $[\mathcal{T}]^k$ is the set of subsets of $\mathcal{T}$ with size less than or equal to $k$.
\end{definition}
\subsection{Participant Satisfaction}
In this setting, as there would be more than one talk scheduled in one slot, in each slot, the participants (rational) would be choosing the talk with maximum interest score.
Thus, the cumulative gain of a participant can be written as below.
\begin{equation}
CG(p|\Gamma)=\sum_{s\in \mathcal{S}} \Big(\max_{t\in \Gamma^{-1}(s)}V_p(t)\Big)\times A_p(s)
\end{equation}
Now, the ideal cumulative gain ($ICG$) or maximum possible gain for a participant will be achieved when top $\abs{\mathcal{S}}$ talks (in descending order of interest scores) are scheduled in $\abs{\mathcal{S}}$ slots (in descending order of availability scores).
It is clear that given the availability and interest scores $ICG$ values for each participant can be easily computed in polynomial time.
Thus, we can still write the participant satisfaction as: $NCG(p|\Gamma)=\frac{CG(p|\Gamma)}{ICG(p)}$.
\subsection{Speaker Satisfaction}
The expected crowd at a talk will increase if for more participants, it is the most interesting talk out of all parallel talks scheduled in the same slot.
Thus, we define the expected crowd at a talk as below.
\begin{equation}
EC(t|\Gamma)=\sum_{p\in \mathcal{P}} V_p(t)\times A_p\big(\Gamma(t)\big) \times \mathbbm{1}_{t=SLOTMAX(p,\Gamma(t))}
\end{equation}
where $SLOTMAX(p,s)=\argmax_{t'\in\Gamma^{-1}(s)}V_p(t')$. 
Now, to find speaker satisfaction, we need to be able to calculate the maximum possible expected crowd at a talk ($IEC$ of the speaker).
It is important that we find $IEC$ in polynomial time.
In \cref{lemma:iec_polytime}, we prove that $IEC$ can be calculated in polynomial time under the mild assumption of $k$ being a small constant (i.e., $k<<n$).
\begin{lemma}
	Given $\mathcal{P},\mathcal{T},\mathcal{S}$ and $V_p(t), A_p(s)$ $\forall p,t,s\in \mathcal{P},\mathcal{T},\mathcal{S}$, and $\abs{\mathcal{T}}\leq\abs{\mathcal{S}}\times k$, $IEC$ of each speaker can be computed in polynomial time under the mild assumption of $k<<n$.
	\label{lemma:iec_polytime}
\end{lemma}
\begin{proof}
	We give polynomial time methods to compute or schedule with maximum $IEC$ for the speaker in following two possible cases. (Note that $\abs{\mathcal{P}}=m$, $\abs{\mathcal{T}}=n$, $\abs{\mathcal{S}}=l$.)\\
	
	Case (a): $k(l-1)\geq n-1$ \\
	%$\abs{\mathcal{S}}\times k- \abs{\mathcal{T}}\geq (k-1)$ or $(\abs{\mathcal{S}}-1)k\leq \abs{\mathcal{T}}-1$.\\
	In this case, scheduling specific talk $t$ in the most favorable slot with no parallel talks, will ensure maximum $IEC$ for the speaker of talk $t$. Here, a rational speaker would reserve the most favorable slot only for her talk, and all other talks can be scheduled in the remaining slots. Finding the most favorable slot will require $\mathcal{O}(mnl)$ time, and then randomly scheduling other talks will require $\mathcal{O}(n)$.\\
	
	Case (b): $k(l-1)< n-1$ \\
	%$0\leq\abs{\mathcal{S}}\times k- \abs{\mathcal{T}}< (k-1)$.\\
	In this case, the rational speaker can not reserve a slot only for her talk, as doing so will leave no feasible schedule for the remaining talks.
\end{proof}
\begin{figure}[t!]
	\center{\subfloat[{\bf Computation Time}]{\pgfplotsset{width=0.4\textwidth,height=0.3\textwidth,compat=1.9}
			\input{figures/syn_ILP_time}}
		\vfil
		\subfloat{\pgfplotsset{width=0.9\textwidth,compat=1.9}
			\begin{tikzpicture}
			\begin{customlegend}[legend entries={{\bf $\abs{\mathcal{P}}$ varied  with $\abs{\mathcal{S}}=\abs{\mathcal{T}}=10$},{\bf $\abs{\mathcal{S}}=\abs{\mathcal{T}}$ varied with $\abs{\mathcal{P}}=10$}},legend columns=1,legend style={/tikz/every even column/.append style={column sep=0.2cm}}]
			\addlegendimage{red,mark=o,thick,sharp plot}
			\addlegendimage{blue,mark=diamond,thick,sharp plot}
			\end{customlegend}
			\end{tikzpicture}}
	}\caption{Plot of computation time on synthetic dataset with variation in $\abs{\mathcal{P}}$, $\abs{\mathcal{S}}$ and $\abs{\mathcal{T}}$. When $\abs{\mathcal{P}}$ is varied, $\abs{\mathcal{S}}$ and $\abs{\mathcal{T}}$ are fixed at 10. While, when $\abs{\mathcal{S}}$ and $\abs{\mathcal{T}}$ are varied equivalently, $\abs{\mathcal{P}}$ is fixed at 10.}
\end{figure}
\begin{figure}
	\center{\small
		\subfloat{\pgfplotsset{width=0.40\textwidth,height=0.25\textwidth,compat=1.9}
			\input{figures/slot_difference_distribution}\label{fig:slot_difference_distribution}}
		\vfil
		\subfloat{\pgfplotsset{width=.7\textwidth,compat=1.9}
			\begin{tikzpicture}
			\begin{customlegend}[legend entries={$\mathcal{T}_{1231}$,$\mathcal{T}_{12312}$,$\mathcal{T}_{123123}$,$\mathcal{T}_{123121}$, $2$ Full},legend columns=3,legend style={/tikz/every even column/.append style={column sep=0.5cm}}]
			\addlegendimage{red,mark=*,mark size=0.3pt,sharp plot}
			\addlegendimage{teal,mark=*,mark size=0.3pt,sharp plot}
			\addlegendimage{orange,mark=*,mark size=0.3pt,sharp plot}
			\addlegendimage{black,mark=*,mark size=0.3pt,sharp plot}
			%\addlegendimage{violet,mark=o,mark size=1pt,sharp plot}
			\addlegendimage{blue,mark=*,mark size=0.3pt,sharp plot}
			%\addlegendimage{green,mark=o,sharp plot}
			
			\end{customlegend}
			\end{tikzpicture}}
	}\vspace{-2mm}
	\caption{Gaps between two assigned slots}\label{fig:repetition_differences}
	\vspace{-2mm}
\end{figure}

\fi

\end{document}